\newcommand{\be}{\begin{equation}}
\newcommand{\ee}{\end{equation}}
\newtheorem{theorem}{Theorem}
\newtheorem{proposition}{Proposition}
\newtheorem{remark}{Remark}
\newtheorem{lemma}{Lemma}
\newtheorem{corollary}{Corollary}
\newtheorem{definition}{Definition}
\newcommand{\eq}[2]{\begin{equation}\begin{split}#1\end{split}\label{#2}\end{equation}}
\begin{document}

\title[Smooth solitary waves in the $b$-Camassa-Holm equation]{Stability of smooth solitary waves in the $b$-Camassa--Holm equation}

\author[S. Lafortune]{St\'ephane Lafortune}
\address[S. Lafortune]{Department of Mathematics, College of Charleston, Charleston, SC 29401, USA}
\email{lafortunes@cofc.edu}

\author{Dmitry E. Pelinovsky}
\address[D.E. Pelinovsky]{Department of Mathematics and Statistics, McMaster University,	Hamilton, Ontario, Canada, L8S 4K1}
\email{dmpeli@math.mcmaster.ca}

\date{\today}
\maketitle

\begin{abstract} 
	We derive the precise stability criterion for smooth solitary waves in the $b$-family of Camassa--Holm equations. 
	The smooth solitary waves exist on the constant background. In the integrable cases $b = 2$ and $b = 3$, we show analytically 
	that the stability criterion is satisfied and smooth solitary waves are orbitally stable with respect to perturbations in $H^3(\mathbb{R})$.  In the non-integrable cases, we show numerically and asymptotically that the stability criterion is satisfied for every $b > 1$. The orbital stability theory relies on a different Hamiltonian formulation compared to the Hamiltonian formulations available in the integrable cases.
\end{abstract} 

\section{Introduction}

The $b$-family of Camassa--Holm equations (which we simply call $b$-CH) is written for the scalar velocity variable $u = u(t,x)$ in the form
\begin{equation}
\label{bCH}
u_t-u_{txx}+(b+1)uu_x = bu_xu_{xx}+uu_{xxx},
\end{equation}
where $b$ is arbitrary parameter. The $b$-CH model was introduced in \cite{dhh,Dullin} by using transformations of the integrable hierarchy of KdV equations. The $b$-CH model is not integrable in general but it has the same asymptotic accuracy as the integrable cases of $b = 2$ called the Camassa--Holm equation \cite{Cam} and $b = 3$ called the Degasperis--Procesi equation  \cite{dp}. The $b$-CH model describes the horizontal velocity $u = u(t,x)$ for the unidirectional propagation of waves along the surface of a shallow water flowing over a flat bed at a certain depth \cite{rossen}. The hydrodynamical relevance of the Camassa--Holm and Degasperis--Procesi equations for modelling of shallow water waves was discussed in 
\cite{Cam2,Const4,Johnson}. 
 
Peaked and smooth solitary waves exist in the $b$-CH equation (\ref{bCH}), depending on the values of the parameter $b$ and the parameter $k$ for the constant background. Traveling waves in the $b$-CH equation were studied by 
using dynamical system methods \cite{Guo} and hodograph transformations \cite{Hone2022}.

Early numerical simulations on the zero background ($k=0$) in \cite{Holm1,Holm2} showed that the initial data resolves into a sequence of peaked solitary waves called {\em peakons} for $b > 1$ and a sequence of smooth solitary waves 
called {\em leftons} for $b < -1$. In the intermediate case of $b \in (-1,1)$, 
the initial data generates a rarefactive wave with exponentially decaying tails.
Recent numerical experiments in \cite{Char1} added more examples of dynamics of peaked solitary waves which appear to be unstable for $b < 1$ and stable for $b > 1$.
 
Stability of both peaked and smooth solitary waves on the zero background  
($k=0$) has been now well understood. Orbital stability of {\em leftons} for $b < -1$ 
was shown in \cite{Hone14} by using the variational formulation from \cite{dhh-proc} and analyzing perturbations in some 
exponentially weighted spaces. Orbital stability of {\em peakons} was shown 
for $b = 2$ in \cite{Const5,Cons1} and for $b = 3$ in \cite{LinLiu} by using conservation of two energy integrals in the energy space 
$H^1(\mathbb{R})$ for $b = 2$ and $L^2(\mathbb{R}) \cap L^3(\mathbb{R})$ for $b = 3$. However, the initial-value problem for the $b$-CH equation with $b > 1$ is ill-posed in $H^s(\mathbb{R})$ for $s < \frac{3}{2}$ \cite{Him} due to the lack of continuous dependence and norm inflation, hence the orbital stability 
in the energy space is only conditional with respect to the existence 
of local solutions. It was shown in \cite{Linares} that $H^1(\mathbb{R}) \cap W^{1,\infty}(\mathbb{R})$ is the largest space 
where the initial-value problem near the peaked solitary waves is defined. 
However, the $W^{1,\infty}(\mathbb{R})$ norm of perturbations 
grows generally and induces the nonlinear instability of {\em peakons} 
in $H^1(\mathbb{R}) \cap W^{1,\infty}(\mathbb{R})$. This was first shown 
with the method of characteristics for $b = 2$ in \cite{Natali} (and in the periodic settting, in \cite{MP-2021}) and for the cubic Novikov equation 
in \cite{ChenPelin}. In our previous work \cite{LP-21}, we have proven spectral and linear instability 
of {\em peakons} in the $b$-CH equation for every $b$.

Solitary waves on the constant background $k \neq 0$ are smooth. The smooth solitary waves 
can be found from the transformation $u(t,x) = k + v(t,x-kt)$, where 
$v(t,x)$ satisfies the equivalent version of the $b$-CH equations:
\begin{equation}
\label{bCH-v}
v_t-v_{txx}+(b+1)v v_x = k v_x + b v_x v_{xx}+ v v_{xxx}.
\end{equation}
If $v(t,x) \to 0$ as $|x| \to \infty$, then $u(t,x) \to k$ as $|x| \to \infty$.

Orbital stability of smooth solitary waves was obtained for $b = 2$ in \cite{CS-02} and for $b = 3$ in \cite{Liu-21} by working with the conserved energy integrals in the energy space. Since the Hamiltonian structure used in \cite{CS-02} and \cite{Liu-21} is a special feature due to integrability of the $b$-CH equations for $b = 2$ and $b = 3$, these results cannot be extended to other values of $b$. 
 
{\em The purpose of this work is to study orbital stability of the smooth 
	solitary waves in the $b$-CH equations for any $b > 1$.} 
We explore 
the Hamiltonian formulation of the $b$-CH equation (\ref{bCH}) from \cite{dhh-proc} 
and the characterization of smooth traveling waves from \cite{GeyerSAPM}. 
As a result of relevant computations, we obtain the precise criterion for orbital stability of the smooth solitary waves with respect to perturbations in $H^3(\mathbb{R})$. 
We show analytically for $b = 2$ and $b = 3$ and numerically for other values of $b > 1$ that the stability criterion is satisfied. This yields orbital stability of the smooth solitary waves with respect to perturbations in $H^3(\mathbb{R})$.

The Hamiltonian formulation of the $b$-CH equation (\ref{bCH}) is developed 
by using the momentum density $m := u - u_{xx}$, for which the $b$-CH equation (\ref{bCH}) can be rewritten in the form:
\begin{equation}\label{bCHm}
m_t + u m_x + b m u_x = 0.
\end{equation}
If $b \neq 1$, this equation can be cast in the Hamiltonian form
\begin{equation}
\label{sympl-1}
\frac{dm}{dt} = J_m \frac{\delta E}{\delta m}, 
\end{equation}
where 
$$
J_m := -\frac{1}{b-1} (bm \partial_x + m_x) (1-\partial_x^2)^{-1} \partial_x^{-1} (b \partial_x m - m_x)
$$
is the skew-adjoint operator in $L^2(\mathbb{R})$ such that $J_m^* = -J_m$ and 
\begin{equation}
\label{E}
E(m) = \int_{\mathbb{R}} m dx
\end{equation} 
is the conserved mass integral. When solutions of the $b$-CH equation (\ref{bCHm}) are considered on the zero background, there exist two other conserved quantities \cite{dhh-proc} given by 
\begin{equation}
\label{Eu}
F_1(m) = \int_{\mathbb{R}} m^{\frac{1}{b}} dx
\end{equation}
and
\begin{equation}
\label{Fu}
F_2(m) = \int_{\mathbb{R}} \left( \frac{m_x^2}{b^2 m^2} + 1 \right) m^{-\frac{1}{b}} dx.
\end{equation}
The conserved quantities $E$ and $F_2$ were used in the study 
of orbital stability of {\em leftons} for $b < -1$ 
in the exponentially decaying spaces, for which these two functionals are well-defined \cite{Hone14}.

When solutions of the $b$-CH equation (\ref{bCHm}) are considered on the nonzero constant background with $m(t,x) \to k$ decaying fast as $|x| \to \infty$, we have to redefine 
the conserved quantities $F_1(m)$ and $F_2(m)$ in order to eliminate divergence of the constant background. For $k > 0$, we consider the class of functions in the set 
\begin{equation}
\label{function-space}
X_k = \left\{ m - k \in H^1(\mathbb{R}) : \quad m(x) > 0, \;\; x \in \mathbb{R} \right\}
\end{equation}
and redefine the conserved quantities for $m \in X_k$ as
\begin{equation}
\label{mass-hat}
\hat{E}(m) = \int_{\mathbb{R}} (m-k) dx,
\end{equation}
\begin{equation}
\label{Eu-hat}
\hat{F}_1(m) = \int_{\mathbb{R}} \left[ m^{\frac{1}{b}} - k^{\frac{1}{b}} \right] dx
\end{equation}
and
\begin{equation}
\label{Fu-hat}
\hat{F}_2(m) = \int_{\mathbb{R}} 
\left[ \left( \frac{m_x^2}{b^2 m^2} + 1 \right) m^{-\frac{1}{b}} 
- k^{-\frac{1}{b}}  \right] dx.
\end{equation}
Since $m(x) \to k$ as $|x| \to \infty$, $k > 0$, and $m(x) > 0$, 
there exists $m_0 > 0$ such that $m(x) \geq m_0$ for every $x \in \mathbb{R}$. 
Therefore, $\hat{F}_1(m)$ and $\hat{F}_2(m)$ are well-defined in $X_k$ due to 
the Banach algebra property of $H^1(\mathbb{R})$ and the conservation 
of $\hat{E}(m)$. The proof of orbital stability 
of smooth solitary waves becomes simpler compared to \cite{Hone14}.

Local well-posedness of the $b$-CH equation (\ref{bCH-v}) with $k > 0$ can be easily shown for the initial data $v(0,\cdot) = v_0 \in H^s(\mathbb{R})$ with $s > \frac{3}{2}$ \cite{ConstEcher,Escher,Zhou}. 
This well-posedness theory for the function $v \in H^s(\mathbb{R})$ translates to the function $m \in X_k$ if $s = 3$. Moreover, it is well-known that the local solutions exist for all times 
and do not break if the momentum density $m(t,x)$ is strictly positive \cite{C-E,Escher,Zhou}, which is true for $m \in X_k$. Assuming that the local and global well-posedness of the $b$-CH equation (\ref{bCH-v}) is well-known, we adopt the following definition of orbital stability of travelling waves in the set $X_k$. 

\begin{definition}
	\label{def-main}
	Let $m(t,x) = \mu(x-ct)$ be the travelling wave solution of the $b$-CH equation (\ref{bCHm}) with $\mu \in X_k$. We say that the travelling wave is orbitally stable in $X_k$ if for every $\varepsilon > 0$ there exists $\delta > 0$ such that for every $m_0 \in X_k$ satisfying 
	$\| m_0 - \mu \|_{H^1} < \delta$, there exists a unique solution 
	$m \in C^0(\mathbb{R},X_k)$ of the $b$-CH equation (\ref{bCHm}) 
	with the initial datum $m(0,\cdot) = m_0$ satisfying 
	$$
	\inf_{x_0 \in \mathbb{R}} \| m(t,\cdot) - \mu(\cdot-x_0) \|_{H^1} < \varepsilon, \quad t \in \mathbb{R}.
	$$
\end{definition}

The following theorem presents the main result of this work. 

\begin{theorem}
	\label{theorem-main}
	For fixed $b > 1$, $c > 0$, and $k \in (0,(b+1)^{-1} c)$, there exists a unique solitary wave $m(t,x) = \mu(x-ct)$ of the $b$-CH equation (\ref{bCHm}) with profile $\mu \in C^{\infty}(\mathbb{R})$ satisfying $\mu(x) > 0$ for $x \in \mathbb{R}$, $\mu'(0) = 0$, and $\mu(x) \to k$ as $|x| \to \infty$ exponentially fast. The solitary wave is orbitally stable in $X_k$ if the mapping 
	\begin{equation}
	\label{map-intro}
	k \mapsto Q(\phi) := \int_{\mathbb{R}} 
	\left[ b \left( \frac{c-k}{c-\phi} \right) - \left(\frac{c-k}{c-\phi} \right)^{b} -b+1 \right]   dx
	\end{equation}
	is strictly increasing, where $\phi := k + (1- \partial_x^2)^{-1} (\mu - k)$ is uniquely defined. 
\end{theorem}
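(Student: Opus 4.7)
The plan is to proceed in three stages: existence of the profile, construction of a Lyapunov functional from the conserved quantities $\hat E, \hat F_1, \hat F_2$, and translation of the abstract orbital stability criterion into the monotonicity of the functional $Q$ defined in \eqref{map-intro}.

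\emph{Existence.} Substituting $m(t,x) = \mu(x-ct)$ and $u(t,x) = \phi(x-ct)$ (with $\mu = \phi - \phi''$ and $\mu,\phi \to k$ at infinity) into \eqref{bCHm} and integrating once in $x$ gives the algebraic traveling wave relation
\[
\mu(c-\phi)^b = k(c-k)^b,
\]
which expresses $\mu$ explicitly as a function of $\phi$. Inserting this into $\mu = \phi - \phi''$ yields an autonomous second-order ODE for $\phi$; multiplying by $\phi'$ and integrating once more produces an energy of the form $(\phi')^2 = V(\phi;c,k)$ with $V(k) = V'(k) = 0$. The phase portrait carried out in \cite{GeyerSAPM} shows that for $k \in (0, (b+1)^{-1} c)$ the fixed point $(\phi,\phi') = (k,0)$ is a saddle whose stable and unstable manifolds coincide along a homoclinic loop entirely contained in $\{\phi < c\}$. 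This orbit is the unique smooth, positive, symmetric profile with $\mu'(0) = \phi'(0) = 0$ and exponential decay of $\phi - k$ to zero.

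\emph{Lyapunov functional and spectral analysis.} I would form
\[
\Lambda(m) := \hat F_2(m) + \alpha \hat F_1(m) + \beta \hat E(m),
\]
with scalars $\alpha = \alpha(b,c,k)$ and $\beta = \beta(b,c,k)$ chosen so that $\delta\Lambda/\delta m$ vanishes at $m = \mu$. Using
\[
\frac{\delta \hat F_1}{\delta m} = \tfrac{1}{b}\, m^{1/b - 1}, \qquad \frac{\delta \hat E}{\delta m} = 1,
\]
together with the Euler--Lagrange expression for $\delta \hat F_2/\delta m$, the first-variation equation reduces after one integration to the algebraic traveling wave relation above, from which $\alpha, \beta$ are read off in closed form. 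Taylor expansion gives $\Lambda(\mu + v) - \Lambda(\mu) = \tfrac12 \la Lv, v\ra + o(\|v\|_{H^1}^2)$ with self-adjoint $L := \Lambda''(\mu)$ on $L^2(\mathbb{R})$. I would then verify the spectral picture needed for the Grillakis--Shatah--Strauss framework: $\ker L = \mathrm{span}\{\mu'\}$ from translational invariance, exactly one simple negative eigenvalue, and positive essential spectrum bounded away from zero. The essential spectrum statement follows from the constant-coefficient limit of $L$ at $m \equiv k$, while the count of negative eigenvalues is obtained by Sturm--Liouville oscillation theory on the explicit profile $\mu$ combined with an index computation relating $n(L)$ to parameter derivatives of $\mu_k$.

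\emph{Slope condition and identification with $Q$.} Under this spectral picture, orbital stability of the one-parameter family $\{\mu_k\}$ (for $c$ fixed and $k \in (0,(b+1)^{-1}c)$) is equivalent to positivity of a scalar slope functional obtained from restricting $\Lambda$ to the soliton manifold. Using the traveling wave relation in the form $(c-k)/(c-\phi) = (\mu/k)^{1/b}$, the integrand in \eqref{map-intro} rewrites as
\[
b k^{-1/b}\bigl[\mu^{1/b} - k^{1/b}\bigr] - k^{-1}[\mu - k],
\]
so that $Q(\phi) = bk^{-1/b}\hat F_1(\mu) - k^{-1}\hat E(\mu)$. The abstract slope condition therefore reduces precisely to the monotonicity $dQ/dk > 0$. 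Combined with coercivity of $\la L\cdot,\cdot\ra$ on the codimension-two subspace constrained by $\hat E$ and $\hat F_1$, together with the global well-posedness in $C^0(\mathbb{R}, X_k)$ recalled before Definition \ref{def-main}, this yields orbital stability in the sense of Definition \ref{def-main}.

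The principal obstacle is the spectral count for $L$: establishing precisely one simple negative eigenvalue without invoking the integrability-specific Hamiltonian structures used for $b = 2,3$ in \cite{CS-02, Liu-21}. Because the symplectic operator $J_m$ of \eqref{sympl-1} depends nonlinearly on $m$, the standard GSS framework must be adapted to a constrained variational problem on the joint level sets of $\hat E$ and $\hat F_1$, and most of the technical work will go into showing that the restriction of $L$ to this subspace has exactly the signature producing the clean slope condition expressible through $Q$.
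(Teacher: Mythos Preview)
Your outline follows essentially the same strategy as the paper --- variational characterization of $\mu$ as a critical point of a combination of $\hat E,\hat F_1,\hat F_2$, spectral analysis of the Hessian via Sturm oscillation, and a Vakhitov--Kolokolov/GSS-type slope criterion --- and your identification $Q(\phi)=bk^{-1/b}\hat F_1(\mu)-k^{-1}\hat E(\mu)$ is correct and useful (the paper records the same fact in the form $\hat F:=b\hat F_1-k^{1/b-1}\hat E$, so that $Q=k^{-1/b}\hat F(\mu)$).

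Two points need correction. First, the ``codimension-two subspace constrained by $\hat E$ and $\hat F_1$'' is not well posed: $\hat E'(\mu)=1$ does not decay, so one cannot impose $\langle 1,\tilde m\rangle=0$ on $H^1$ perturbations. The paper handles this by observing that on $X_k$ the first variations of $\hat E,\hat F_1,\hat F_2$ are not independent, and only \emph{one} scalar constraint is admissible, namely the decaying combination $\hat F'(\mu)=\mu^{1/b-1}-k^{1/b-1}$ (equivalently $(c-\phi)^{b-1}-(c-k)^{b-1}$). Your argument should be reorganized around this single constraint.

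Second, and more substantively, the sentence ``the abstract slope condition therefore reduces precisely to $dQ/dk>0$'' hides the main computation. The VK criterion requires the sign of $\langle\mathcal L^{-1}v_0,v_0\rangle$ with $v_0=\hat F'(\mu)$, and the natural candidate $\partial_k\mu$ does \emph{not} lie in $L^2$ (it tends to $1$ at infinity), so you cannot simply write $\mathcal L^{-1}v_0\propto\partial_k\mu$ and pair. The paper resolves this by treating $(g,a,c)$ as independent parameters, computing $\mathcal L\partial_g\mu$, $\mathcal L\partial_a\mu$, $\mathcal L\partial_c\mu$ explicitly, and then finding the decaying combination $\partial_k\mu-k^{-1}\mu$ for which $\mathcal L(\partial_k\mu-k^{-1}\mu)$ is an explicit multiple of $v_0$. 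Only after this does the inner product reduce to $-\mathrm{const}\cdot dQ/dk$. This step, rather than the eigenvalue count (which follows directly from Sturm theory since $\mu'$ has one zero; no separate index computation is needed), is where the real work lies, and your proposal should make the renormalization $\partial_k\mu\mapsto\partial_k\mu-k^{-1}\mu$ explicit.
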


\begin{remark}
	Theorem \ref{theorem-main} implies that if $u(t,x) = \phi(x-ct)$ is the  travelling wave solution of the $b$-CH equation (\ref{bCH}) with $\phi \in Y_k$ given by 
\begin{equation}
\label{function-space-Y}
Y_k = \left\{ u - k \in H^3(\mathbb{R}) : \quad u(x) - u''(x) > 0, \;\; x \in \mathbb{R} \right\},
\end{equation}	
then it is orbitally stable in $Y_k$ in the $H^3(\mathbb{R})$ norm. Indeed, integration by parts yields
$$
\| m \|_{H^1}^2 = \int_{\mathbb{R}} (m^2 + m_x^2) dx 
= \int_{\mathbb{R}} (u^2 + 3 u_x^2 + 3 u_{xx}^2 + u_{xxx}^2) dx,
$$
which is equivalent to the squared $H^3(\mathbb{R})$ norm on $u$.
\end{remark}

\begin{remark}
	The stability criterion of 	Theorem \ref{theorem-main} is verified for $b = 2$ and $b = 3$ based on analytical computations. This gives an alternative proof of the orbital stability of smooth solitary waves in the Camassa--Holm and Degasperis--Procesi equations compared to \cite{CS-02} and \cite{Liu-21}, respectively.
\end{remark}

\begin{remark}
	The stability criterion of 	Theorem \ref{theorem-main} is only verified numerically for other values of $b > 1$ and asymptotically in the limit $k \to 0$ and $k \to (b+1)^{-1} c$. It is an open question on proving the stability criterion analytically in the general case.
\end{remark}

\begin{remark} 
	Additional families of smooth traveling solitary waves also exist in $X_k$ for $b \leq 1$ and their precise stability criterion can be obtained 
	similarly by using the Hamiltonian form (\ref{sympl-1}) with the same $J_m$ if $b < 1$ and with the modified expression for $J_m$ if $b = 1$ \cite{dhh-proc}. 
\end{remark}

\begin{remark}
	Orbital stability of smooth multi-soliton solutions of the Camassa--Holm equation with $b = 2$ was recently proven in \cite{LiuWang}. The proof relies on the bi--Hamiltonian structure of the Camassa--Holm equation and may not be generalized for a general case $b > 1$, for which the only Hamiltonian structure is given by (\ref{sympl-1}).
\end{remark}

The article is organized as follows. Travelling waves including the smooth solitary waves on the nonzero constant background are characterized in Section \ref{sec-trav}. Variational characterization of the travelling solitary waves 
in terms of the mass and energy integrals is developed in Section \ref{sec-var}. Derivation and proof of the stability criterion in Theorem \ref{theorem-main} are given in Section \ref{sec-stab}. 
Verification of the stability criterion is reported in Section \ref{sec-num}. 
We give the summary and discuss open directions in the concluding Section \ref{sec-conclude}.

\section{Traveling waves}
\label{sec-trav}

Let us consider traveling waves of the $b$-CH equation (\ref{bCH}) in the form $u(t,x) = \phi(x-ct)$ with speed $c$ and profile $\phi$ found from the third-order differential equation
\begin{equation}
\label{third-order}
-(c-\phi) (\phi''' - \phi') + b \phi' (\phi'' - \phi) = 0.
\end{equation}
The following lemma characterizes the family of solitary waves on the nonzero constant background parameterized by the arbitrary parameter $k > 0$. 

\begin{lemma}
	\label{lem-trav}
	For fixed $b > 1$ and $c > 0$, there exists a one-parameter 
	family of smooth solitary waves with profile $\phi \in C^{\infty}(\mathbb{R})$ satisfying $\phi'(0) = 0$ and 
	$\phi(x) \to k$ as $|x| \to \infty$ if and only if the arbitrary parameter $k$ belongs 
	to the interval $(0,(b+1)^{-1} c)$. Moreover, 
	\begin{equation}
	\label{smooth-soliton}
	0 <	\phi(x) < c, \qquad x \in \mathbb{R},
	\end{equation}
	and the family is smooth with respect to parameter $k$ in $(0,(b+1)^{-1} c)$.
\end{lemma}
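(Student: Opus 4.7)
My plan is to integrate the third-order ODE (\ref{third-order}) twice by exploiting two first integrals, obtain an energy-type identity $(\phi')^2 = F(\phi)$ with an explicit $F$, and then read off existence, uniqueness, smoothness, and the sharp range of $k$ from phase-plane analysis of $F$. Concretely, introduce $m := \phi - \phi''$, so that (\ref{third-order}) becomes $(c-\phi)m' = b\phi' m$. This separates and, using $m\to k$ and $\phi\to k$ at infinity, yields the first integral $m(c-\phi)^b = k(c-k)^b$, equivalently $m = k(c-k)^b/(c-\phi)^b$. Substituting into $\phi'' = \phi - m$, multiplying by $\phi'$, integrating once more, and applying the decay conditions produces
\[
(\phi')^2 = F(\phi) := \phi^2 - k^2 + \frac{2k(c-k)}{b-1}\left[1 - \frac{(c-k)^{b-1}}{(c-\phi)^{b-1}}\right].
\]

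I would then analyze $F$. Direct computation gives $F(k)=F'(k)=0$, $F''(k) = 2(c-(b+1)k)/(c-k)$, and $F'(\phi) = 2(\phi - m(\phi))$, which on $(0,c)$ vanishes precisely when the strictly concave function $G(\phi) := \phi(c-\phi)^b$ satisfies $G(\phi) = G(k)$. For $k \in (0,(b+1)^{-1}c)$ the maximum of $G$ lies at $c/(b+1)>k$, so $G=G(k)$ has a unique second root $\phi_* \in (c/(b+1), c)$; hence $F'>0$ on $(k,\phi_*)$ and $F'<0$ on $(\phi_*, c)$. Combined with $F''(k)>0$ and $F(\phi)\to -\infty$ as $\phi\to c^-$, this forces $F$ to have a unique simple zero $\phi_{\max}\in(\phi_*, c)$ with $F>0$ on $(k,\phi_{\max})$. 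The borderline case $k = (b+1)^{-1}c$ yields $F''(k)=0$ and $F\leq 0$ on $[k,c)$, while $k>(b+1)^{-1}c$ gives $F''(k)<0$ and hence $F<0$ on both sides of $k$; in either case no solitary-wave orbit with $\phi\to k$ at infinity exists, which pins down the sharp range.

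With $F>0$ on $(k,\phi_{\max})$, having a double zero at $k$ and a simple zero at $\phi_{\max}$, the quadrature
\[
x = \int_{\phi(x)}^{\phi_{\max}} \frac{d\psi}{\sqrt{F(\psi)}}
\]
converges at $\phi_{\max}$ and diverges at $k$, producing a profile on all of $\mathbb{R}$ with $\phi(0)=\phi_{\max}$, $\phi'(0)=0$, $\phi(x)\to k$ as $|x|\to\infty$, and $\phi \in [k,\phi_{\max}]\subset (0,c)$, which is exactly (\ref{smooth-soliton}). Uniqueness (once the translation is fixed by $\phi'(0)=0$) and smoothness $\phi\in C^\infty(\mathbb{R})$ follow from standard ODE theory applied to $\phi'' = \phi - m(\phi)$, whose right-hand side is analytic in $\phi \in (0,c)$; exponential decay at rate $\lambda = \sqrt{(c-(b+1)k)/(c-k)}$ is read off from the saddle linearization at $(\phi,\phi')=(k,0)$; and smooth dependence on $k$ follows from smooth parameter dependence of the ODE together with the implicit function theorem applied to $F(\phi_{\max};k)=0$, using that $\phi_{\max}$ is a simple zero of $F(\cdot;k)$. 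The main obstacle is the phase-plane case analysis ruling out the threshold and supercritical cases $k\geq (b+1)^{-1}c$, where one must simultaneously monitor the signs of $F$, $F'$, and $F''$ near the equilibrium $\phi = k$ and near the singularity $\phi = c$; once $F$ is understood, the remaining claims reduce to routine ODE arguments.
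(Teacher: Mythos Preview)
Your approach is essentially the same as the paper's: integrate twice to a first-order invariant and do phase-plane analysis. The paper writes the invariant in ``potential'' form $U(\phi)$ and classifies equilibria as saddle/center; you write it in ``energy'' form $(\phi')^2=F(\phi)$ and track zeros and signs of $F$. These are equivalent, and your derivation of $F$ and of $F'(\phi)=2(\phi-m(\phi))$, $F''(k)=2(c-(b+1)k)/(c-k)$ is correct.

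Two small points. First, $G(\phi)=\phi(c-\phi)^b$ is \emph{not} strictly concave on $(0,c)$ (check $G''$ changes sign at $2c/(b+1)$); what you actually need and use is that $G$ is unimodal on $(0,c)$ with its unique interior maximum at $c/(b+1)$, which follows directly from $G'(\phi)=(c-\phi)^{b-1}(c-(b+1)\phi)$. Second, your ``only if'' is incomplete: you rule out $k\ge(b+1)^{-1}c$ but say nothing about $k\le 0$ (or $k\ge c$). For $k\le 0$ one checks that $G(\phi)=G(k)$ has only the root $\phi=k$ in $(-\infty,c)$, so $F'$ does not change sign and $F>0$ on both sides of $k$; the unstable manifold then escapes and no homoclinic exists. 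The paper handles this via the count of critical points of $U$ as $a$ varies over $(-\infty,0]$, $(0,\mathfrak a)$, $\{\mathfrak a\}$, $(\mathfrak a,\infty)$. With those two fixes your argument is complete and matches the paper's.
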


\begin{proof}
Multiplying (\ref{third-order}) by $(c-\phi)^{b-1}$ and integrating in $x$ 
yield the second-order equation:
\begin{equation}
\label{second-order}
-(c - \phi)^b (\phi'' - \phi) = a,
\end{equation}
where $a$ is the integration constant. The second-order equation 
(\ref{second-order}) is conservative and admits the first-order quadrature:
\begin{equation}
\label{first-order}
\frac{1}{2} (b-1) (\phi'^2 - \phi^2) + \frac{a}{(c-\phi)^{b-1}} = g,
\end{equation}
where $g$ is another integration constant.

Smooth solitary wave solutions with profile $\phi \in C^{\infty}(\mathbb{R})$ satisfying $\phi(x) \to k$ as $|x| \to \infty$ correspond to the homoclinic 
orbit from the equilibrium point $(\phi,\phi') = (k,0)$. Taking the limit as $|x|\rightarrow \infty$ in \eqref{second-order} and \eqref{first-order}
yields the relations:
\begin{equation}
\label{ag}
	a = k (c-k)^b, \qquad g = kc - \frac{1}{2} (b+1)k^2.
\end{equation}

For fixed $b > 1$ and $c > 0$, the first-order quadrature (\ref{first-order}) 
represents the energy conservation for a Newtonian particle 
with the mass $m := b-1 > 0$ and energy $g$ under a force with 
the potential energy
$$
U(\phi) := - \frac{1}{2} (b-1) \phi^2 + \frac{a}{(c-\phi)^{b-1}}.
$$
For smooth solutions $\phi \in C^{\infty}(\mathbb{R})$, we consider the restriction $\phi \in (-\infty,c)$. It can be shown that no smooth solitary waves exist for $\phi \in (c,\infty)$ if $b > 1$.

Critical points of $U$ in $(-\infty,c)$ are given by roots of the algebraic equation $\phi (c-\phi)^b = a$ for $\phi \in (-\infty,c)$. Since the global maximum of $\phi \mapsto \phi (c - \phi)^b$ on $(-\infty,c)$
occurs at $\phi = \frac{c}{b+1}$, there exists only one critical point of $U$ for $a \in (-\infty,0]$ and $a = \mathfrak{a}$, two critical points of $U$  for $a \in (0,\mathfrak{a})$, and no critical points 
of $U$  for $a \in (\mathfrak{a},\infty)$, where 
\begin{equation}
\label{frak-a}
\mathfrak{a} := \frac{b^b c^{b+1}}{(b+1)^{b+1}}. 
\end{equation}
Homoclinic orbits with $\phi \in (-\infty,c)$ exist only if at least two roots of the algebraic equation $\phi (c-\phi)^b = a$ exist in $(-\infty,c)$, which 
happens if and only if $a \in (0,\mathfrak{a})$. The two roots 
can be ordered as follows:
\begin{equation}
\label{ordering}
0 < \phi_1 < \frac{c}{b+1} < \phi_2 < c.
\end{equation}
We will now show that the homoclinic orbit does exist if $a \in (0,\mathfrak{a})$.

For fixed $b > 1$ and $c > 0$, the local maximum and minimum points of $U$ gives respectively the saddle point $(\phi_1,0)$ and the center point $(\phi_2,0)$ of the second-order equation (\ref{second-order}). There exists a punctured neighbourhood of the center $(\phi_2,0)$ enclosed by the homoclinic orbit connecting the saddle $(\phi_1,0)$. Thus, $\phi_1 \equiv k \in (0,(b+1)^{-1}c)$ is taken as the arbitrary parameter of the homoclinic orbit, 
which specifies $a$ and $g = U(k)$ by the relation (\ref{ag}). Since $U(\phi) \to +\infty$ as $\phi \to c$ from the left, the homoclinic orbit belongs to the vertical stripe $\{ (\phi,\phi') : 0 < \phi < c \}$ and represents the smooth solitary wave with the profile $\phi \in C^{\infty}(\mathbb{R})$ satisfying  $\phi(x) \to k$ as $|x| \to \infty$. By the translational invariance, the solitary wave profile satisfying $\phi'(0) = 0$ is uniquely defined. 

Finally, smoothness of the family in $k$ is due to the fact that the differential equations depend smoothly on $\phi$, $g$, $a$, and $c$ 
if the solution belongs to the range in (\ref{smooth-soliton}), whereas parameters $a$ and $g$ depends smoothly on $k$ in (\ref{ag}) if $k \in (0,(b+1)^{-1} c)$.
\end{proof}

\begin{figure}[htb!]
	\includegraphics[width=0.48\textwidth]{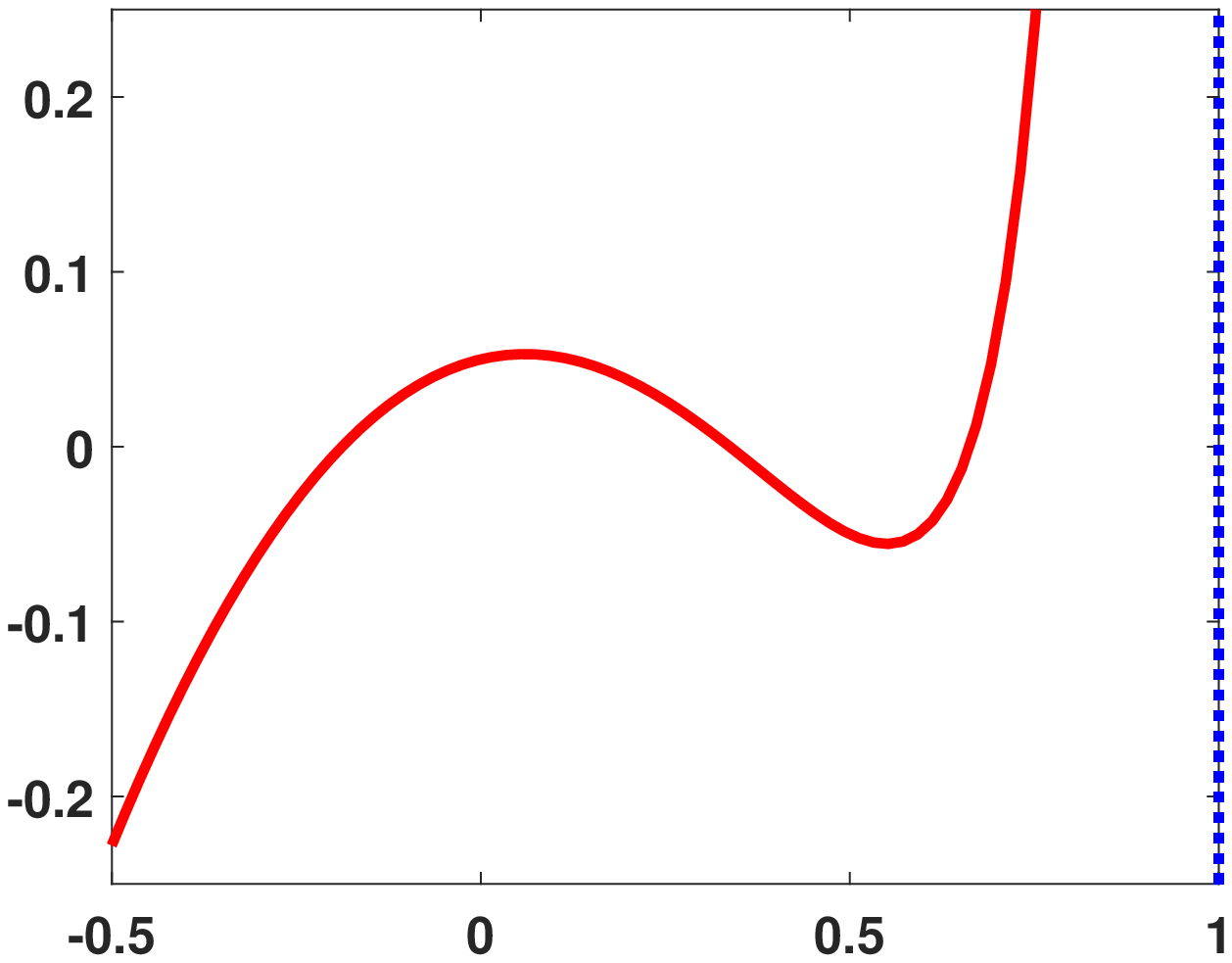}
	\includegraphics[width=0.48\textwidth]{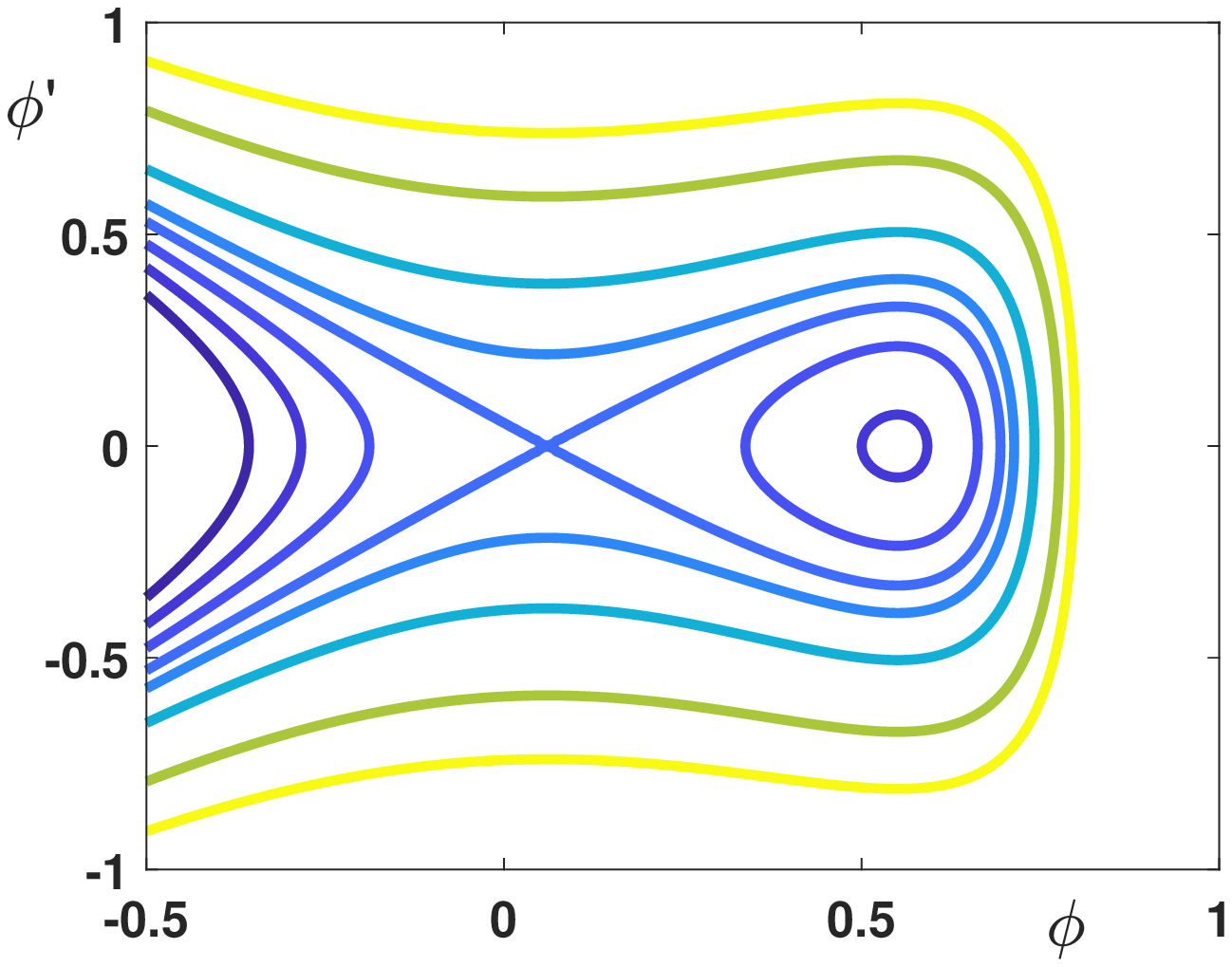}
	\caption{Left: $U$ versus $\phi$ for $b = 3$, $c = 1$, and $a = 0.05 \in (0,\mathfrak{a})$. Right: the phase portrait of the second-order equation 
		(\ref{second-order}) constructed from the level curves of the first-order invariant (\ref{first-order}) on the phase plane $(\phi,\phi')$ for the same parameter values.} \label{fig-plane}
\end{figure}

For illustration of the proof of Lemma \ref{lem-trav}, Figure \ref{fig-plane} (left)  shows the graph of $U$ versus $\phi$ for $b = 3$, $c = 1$, and $a = 0.05 \in (0,\mathfrak{a})$, where $\mathfrak{a}\approx 0.11$.  
The corresponding phase portrait on the phase plane $(\phi,\phi')$  is shown 
on the right panel.

\begin{remark}
	\label{remark-standard}
	The standard integration of the third-order equation (\ref{third-order}) in $x$ gives a different second-order equation
	\begin{equation}\label{CHode}
	-(c-\phi)(\phi''-\phi) + \frac{1}{2} (b-1) (\phi'^2 - \phi^2) = g,
	\end{equation}
which can be integrated to the same first-order quadrature (\ref{first-order}). 
Consequently, the second-order equation (\ref{CHode}) is redundant in view 
of the two equations (\ref{second-order}) and (\ref{first-order}).
\end{remark}

\begin{remark}		\label{remark-scaling}
	The one-parameter family of smooth solitary waves exists for fixed $b > 1$ and $c < 0$. It is obtained from the family in Lemma \ref{lem-trav} by using the symmetry transformation:
\begin{equation}
		\label{scal-transform}
c \mapsto -c, \quad	\phi \mapsto -\phi, \quad 
a \mapsto (-1)^{b+1} a, \quad g \mapsto g,
\end{equation}
which leaves the system of equations (\ref{second-order}) and (\ref{first-order}) invariant. For $b > 1$ and $c < 0$, the arbitrary parameter $k$ belongs to the interval $((b+1)^{-1}c,0)$. This family satisfies $\mu(x) < 0$ for all $x \in \mathbb{R}$ so it does not belong to $X_k$ in (\ref{function-space}).
\end{remark}

\begin{remark}
	\label{remark-limit}
	In the limiting case $k \to 0$, the profile $\phi$ is no longer smooth since $a \to 0$ for which $\phi_1 = 0$ and $\phi_2 = c$. This limit recovers the peaked solitary waves 
	with the profile $\phi(x) = c e^{-|x|}$ considered in our previous work 
	\cite{LP-21}. In the limiting case $k \to (b+1)^{-1} c$, 
	the profile $\phi$ is constant in $x$ since $a \to \mathfrak{a}$ 
	for which $\phi_1 = \phi_2 = (b+1)^{-1} c$. 
\end{remark}

\begin{remark}
	\label{remark-momentum}
If we denote the momentum density for the solitary wave with profile $\phi$ by $\mu := \phi - \phi''$, then the second-order equation \eqref{second-order} 
	gives the relation:
\begin{equation}
\label{muphi}
\mu=\frac{a}{(c-\phi)^b}.
\end{equation}
The relation (\ref{muphi}) can be recovered with the traveling wave reduction 
$m(t,x) = \mu(x-ct)$ of the $b$-CH equation in the momentum form (\ref{bCHm}), from which $\mu$ satisfies the differential equation
	\begin{equation}
	\label{m-phi-connection}
	\mu'(\phi - c) + b \mu \phi' = 0.
	\end{equation}
	After multiplying (\ref{m-phi-connection}) by $(c-\phi)^{b-1}$ and integration in $x$, we obtain (\ref{muphi}).
\end{remark}

\begin{remark}
	\label{remark-positive}
	Inequalities (\ref{smooth-soliton}) for the smooth solitary waves of Lemma \ref{lem-trav} are {\em strict} in the sense that for every fixed $b > 1$, $c > 0$, and $k \in (0,(b+1)^{-1} c)$, there exist positive 
constants $C_{\pm}(b,c,k)$ such that 
		\begin{equation}
	\label{smooth-soliton-strict}
C_-(b,c,k) \leq c -	\phi(x) \leq C_+(b,c,k), \qquad x \in \mathbb{R}.
	\end{equation}
Consequently, there exist positive constants $\hat{C}_{\pm}(b,c,k)$ such that 
\begin{equation}
\label{smooth-soliton-strict-mu}
\hat{C}_-(b,c,k) \leq 	\mu(x) \leq \hat{C}_+(b,c,k), \qquad x \in \mathbb{R}.
\end{equation}
In view of Remark \ref{remark-limit}, $C_-(b,c,k) \to 0$ and $\hat{C}_+(b,c,k) \to \infty$ as $k \to 0$ when the smooth solitary wave becomes the peaked solitary wave for every $b > 1$ and $c > 0$. 
\end{remark}

\begin{remark}
	By a similar phase plane analysis as in the proof of Lemma \ref{lem-trav}, 
	we can identify additional families of smooth solitary waves in $X_k$ 
	for $b \leq 1$. Their stability analysis can be developed 
	similarly to the proof of Theorem \ref{theorem-main} in the case $b > 1$. This is left to interested students as exercises.
\end{remark}

\section{Variational characterization}
\label{sec-var}

The solitary waves of Lemma \ref{lem-trav} can be characterized variationally if the system of equations (\ref{second-order}) and (\ref{first-order}) can be shown to arise as the Euler--Lagrange equation of the action functional 
\begin{equation}
\label{action-1}
\Lambda_{\omega_1,\omega_2}(m) := \hat{E}(m) - \omega_1 \hat{F}_1(m) - \omega_2 \hat{F}_2(m),
\end{equation}
where the normalized mass $\hat{E}(m)$ and the two energies $\hat{F}_1(m)$ and $\hat{F}_2(m)$ are given by (\ref{mass-hat}), (\ref{Eu-hat}), and (\ref{Fu-hat}). The following lemma states that the variational characterization is possible if and only if the Lagrange multipliers $\omega_1$ and $\omega_2$ are uniquely related 
to parameters of the differential equations 
(\ref{second-order}) and (\ref{first-order}).

\begin{lemma}
	\label{lem-EL}
For fixed $b > 1$ and $c > 0$, a critical point $\mu \in X_k$ of the action functional $\Lambda_{\omega_1,\omega_2}$ in (\ref{action-1}) coincides with 
	the solitary wave solution $\mu = \phi - \phi'' \in C^{\infty}(\mathbb{R})$ of Lemma \ref{lem-trav} satisfying (\ref{second-order}) and (\ref{first-order}) if and only if 
\begin{equation}
\label{multipliers-relation}
\omega_1 = \frac{1}{2} \left[ (b-1) c + (b+1) k \right] k^{-\frac{1}{b}}, 
\qquad \omega_2 = \frac{1}{2} (b-1) (c-k) k^{\frac{1}{b}},
\end{equation}
where $k \in (0,(b+1)^{-1} c)$ is an arbitrary parameter.
\end{lemma}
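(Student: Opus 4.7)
The plan is to take variational derivatives of $\hat{E}$, $\hat{F}_1$, $\hat{F}_2$ in $X_k$, write the Euler--Lagrange equation for $\Lambda_{\omega_1,\omega_2}$ as a second-order ODE in $m$, and match it against the first-order quadrature (\ref{first-order}) after using the substitution (\ref{muphi}). The derivatives $\delta \hat{E}/\delta m = 1$ and $\delta \hat{F}_1/\delta m = b^{-1} m^{1/b-1}$ are immediate; the Lagrangian density associated with $\hat{F}_2$, namely $b^{-2} m_x^2 m^{-2-1/b} + m^{-1/b}$, is autonomous in $x$, so its Euler--Lagrange equation admits an $x$-independent first integral $\mathcal{H}(m,m_x) := m_x\,(\partial L/\partial m_x) - L$.

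Rather than working with the second-order ODE directly, I would use this first integral together with the boundary conditions $\mu(x) \to k$ and $\mu'(x) \to 0$ as $|x|\to\infty$ to reduce the critical point equation to
\[
-\,\frac{\omega_2}{b^2}\,\frac{m_x^2}{m^{2+1/b}} + \omega_2\, m^{-1/b} - m + \omega_1\, m^{1/b} = C,
\]
with the constant $C$ determined by $k$. Substituting $\mu = a(c-\phi)^{-b}$ from Remark \ref{remark-momentum} together with $\mu_x = ab\phi'(c-\phi)^{-b-1}$, and multiplying through by $a^{1/b}(c-\phi)$, this collapses to a polynomial identity in $\phi$ with terms proportional to $\phi'^2$, $(c-\phi)^2$, $(c-\phi)^{-(b-1)}$, $(c-\phi)$, and a constant. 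Expanding $(c-\phi)^2 = c^2 - 2c\phi + \phi^2$ and comparing coefficient by coefficient against (\ref{first-order}): the $\phi^2$-term is automatic, the $\phi$-linear term fixes $C$, the $(c-\phi)^{-(b-1)}$-term fixes $\omega_2$, and the constant term fixes $\omega_1$. Inserting $a = k(c-k)^b$ and $g = kc - \tfrac{1}{2}(b+1)k^2$ from (\ref{ag}) then produces exactly the formulas (\ref{multipliers-relation}).

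The converse direction is obtained by reading this matching argument in reverse: for $\omega_1,\omega_2$ given by (\ref{multipliers-relation}), the first integral of the Euler--Lagrange equation coincides with (\ref{first-order}) under the substitution (\ref{muphi}), whose unique decaying solution in $X_k$ is the profile $\mu$ furnished by Lemma \ref{lem-trav}. The main obstacle I anticipate is the algebraic bookkeeping in the substitution step: four monomial coefficients have to be matched simultaneously, and the compact form of $\omega_1$ depends on spotting the factorization $(b-1)c^2 + 2kc - (b+1)k^2 = \bigl[(b-1)c+(b+1)k\bigr](c-k)$. The uniform bounds (\ref{smooth-soliton-strict-mu}) guarantee that all negative and fractional powers of $\mu$ and $(c-\phi)$ remain uniformly bounded and positive on $\mathbb{R}$, so every variational derivative and every subsequent manipulation is well-defined throughout $X_k$.
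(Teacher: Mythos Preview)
Your approach is correct and takes a somewhat different route from the paper's. The paper writes out the full second-order Euler--Lagrange equation (their (\ref{EL-1})), substitutes $\mu(c-\phi)^b=a$ to obtain a second-order equation in $\phi$ (their (\ref{EL-2})), and then reduces in \emph{two} steps: first eliminating $\phi''$ via the integrated equation (\ref{CHode}), and then eliminating $\phi'^2$ via the first-order quadrature (\ref{first-order}). You instead exploit the autonomy of the full Lagrangian density to pass immediately to the Beltrami first integral $m_x\,\partial L/\partial m_x - L = C$, substitute once, and match coefficients directly against (\ref{first-order}). Your route is a bit more economical --- a single substitution and a single comparison of first-order invariants, rather than two sequential reductions --- and it makes transparent why exactly two Lagrange multipliers get pinned down: once the boundary condition fixes $C$ and the overall scale is normalized by the $(c-\phi)^{1-b}$ term, precisely two independent coefficients remain. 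The paper's approach has the minor advantage that it works throughout at the level of the actual Euler--Lagrange equation; in your converse direction the step ``first integral coincides $\Rightarrow$ critical-point equation holds'' implicitly uses that $\mu'$ vanishes only at the single turning point, so that $d\mathcal{H}/dx=0$ forces the EL equation by continuity. That is harmless here, but worth making explicit.
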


\begin{proof}
Let $\mu \in X_k$ be a critical point of $\Lambda_{\omega_1,\omega_2}$. After  straightforward simplifications, the equation $\Lambda_{\omega_1,\omega_2}'(\mu) = 0$ gives the following differential equation,
\begin{equation}
\label{EL-1}
1 - \frac{\omega_1}{b} \mu^{\frac{1}{b}-1} 
+ \frac{\omega_2}{b \mu^{\frac{1}{b}+1}} \left[ \frac{2\mu''}{b \mu} 
- \frac{(2b+1)  (\mu')^2}{b^2 \mu^2} + 1 \right] = 0.
\end{equation}
If $\mu \in X_k$ is a weak solution of the differential equation (\ref{EL-1}), then $\mu \in C^{\infty}(\mathbb{R})$ by bootstrapping arguments, so that 
the set of solitary wave solutions is given by Lemma \ref{lem-trav} if 
we can establish equivalence between the differential equations.

Assuming the relation (\ref{muphi}), we introduce $\phi$ such that 
$\mu (c - \phi)^b = a$ and obtain 
$$
\mu' = \frac{b \phi'}{c-\phi} \mu, \qquad 
\mu'' = \frac{b \phi''}{c-\phi} \mu + \frac{b(b+1) (\phi')^2}{(c-\phi)^2} \mu.
$$
Substituting these relations into (\ref{EL-1}) gives after straightforward simplifications:
\begin{equation}
\label{EL-2}
a b (c-\phi)^{1-b} - \omega_1 a^{\frac{1}{b}}
+ \omega_2 a^{-\frac{1}{b}} \left[ 2 (c-\phi) \phi'' 
- (\phi')^2 + (c-\phi)^2 \right] = 0.
\end{equation}
Substituting (\ref{CHode}) into (\ref{EL-2}) yields 
\begin{equation}
\label{EL-3}
ab (c-\phi)^{1-b} - \omega_1 a^{\frac{1}{b}}
+ \omega_2 a^{-\frac{1}{b}} \left[ b (\phi')^2 - b \phi^2 + c^2 - 2 g \right] = 0.
\end{equation}
Substituting (\ref{first-order}) into (\ref{EL-3}) gives the unique choice for Lagrange multipliers, 
\begin{equation}
\label{multipliers}
\omega_1 = \frac{1}{2 a^{\frac{1}{b}}} \left[ 2g + (b-1) c^2 \right], 
\quad \omega_2 = \frac{1}{2} a^{\frac{1}{b}} (b-1).
\end{equation}
Finally, substituting (\ref{ag}) into (\ref{multipliers}) gives (\ref{multipliers-relation}). 
\end{proof}

\begin{remark}
	\label{remark-var}
	Only one parameter $k \in (0,(b+1)^{-1}c)$ is arbitrary for fixed $b > 1$ and $c > 0$. Although the action functional (\ref{action-1}) has two Lagrange multipliers, first variations of $\hat{E}(m)$, $\hat{F}_1(m)$, and $\hat{F}_2(m)$ are not defined independently of each other for $m \in X_k$ due to nonzero boundary conditions at infinity.
\end{remark}

The variational characterization of Lemma \ref{lem-EL} implies two important 
properties when we add a perturbation $\tilde{m} := m - \mu$ to the solitary wave with the profile $\mu \in C^{\infty}(\mathbb{R})$.
Since $\mu(x)$ is strictly positive and bounded by Remark \ref{remark-positive}, we have $m \in X_k$ if 
$\tilde{m} \in H^1(\mathbb{R})$ and the $H^1(\mathbb{R})$ norm of $\tilde{m}$ is sufficiently small. 

The following two results describe the second-order 
variation of the action functional $\Lambda_{\omega_1,\omega_2}$ 
and the first-order constraint on the perturbation $\tilde{m}\in H^1(\mathbb{R})$. In what follows, $\langle \cdot,\cdot \rangle$ denotes the standard inner product in $L^2(\mathbb{R})$, $\mu := \phi - \phi'' \in C^{\infty}(\mathbb{R})$ is defined by  
Lemma \ref{lem-trav}, and $(\omega_1,\omega_2)$ are defined by Lemma \ref{lem-EL}. 

\begin{corollary}
	\label{cor-second-var}
There exists a sufficiently small positive $\epsilon_0$ such that 
	for every $\tilde{m}\in H^1(\mathbb{R})$ satisfying $\| \tilde{m} \|_{H^1} \leq \varepsilon_0$, we have 
\begin{equation}
\label{expansion-lambda}
\Lambda_{\omega_1,\omega_2}(\mu + \tilde{m}) - \Lambda_{\omega_1,\omega_2}(\mu) = \langle \mathcal{L}\tilde{m},\tilde{m} \rangle + R(\tilde{m}),
\end{equation}	
where 
\begin{eqnarray}
\nonumber
\mathcal{L} &=& -\frac{d}{dx} \frac{c-\phi}{\mu^2} \frac{d}{dx} 
+ \frac{(b+1) (c-\phi)}{2 \mu^2} + \frac{(2b+1) (c-\phi)\mu''}{b \mu^3} \\
&& - \frac{(2b+1)(3b+1) (c-\phi) (\mu')^2}{2 b^2 \mu^4} 
- \frac{(c-k)[(b-1)c + (b+1)k]}{2 (c-\phi) \mu^2}
\label{second-var-1}
\end{eqnarray} 
and $R(\tilde{m})$ is the remainder term satisfying 
$\| R(\tilde{m}) \|_{H^1} \leq C_0  \|\tilde{m}\|_{H^1}^3$ for some $\tilde{m}$-independent positive constant $C_0$.
\end{corollary}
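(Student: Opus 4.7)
The plan is to Taylor expand $\Lambda_{\omega_1,\omega_2}$ about $\mu$ to second order. By Lemma \ref{lem-EL}, $\mu$ is a critical point of $\Lambda_{\omega_1,\omega_2}$, so the first-order term vanishes identically. Since $\hat{E}$ is affine, it contributes no quadratic part. Hence the quadratic form reduces to the second variations of $\hat{F}_1$ and $\hat{F}_2$, multiplied by $-\omega_1$ and $-\omega_2$ respectively.

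First, for $\hat{F}_1$, the density $m^{1/b}$ has second derivative $\frac{1}{b}(\frac{1}{b}-1)\mu^{1/b-2} = -\frac{b-1}{b^2}\mu^{1/b-2}$, producing a purely algebraic quadratic term. Second, write $\hat{F}_2 = \int G(m,m_x)\,dx$ with $G(m,m_x) = (m_x^2/(b^2 m^2) + 1)m^{-1/b}$, and compute $G_{mm}$, $G_{mm_x}$, $G_{m_xm_x}$ at $(\mu,\mu')$. The $G_{m_xm_x}$ piece, combined with the factor $-\omega_2$ and the identity $\mu^{-1/b} = a^{-1/b}(c-\phi)$ (from $\mu = a/(c-\phi)^b$, Remark \ref{remark-momentum}), together with $\omega_2 = \tfrac{1}{2}(b-1)(c-k)k^{1/b}$ and the relation $a = k(c-k)^b$, produces the differential part $-\frac{d}{dx}\bigl[\frac{c-\phi}{\mu^2}\bigr]\frac{d}{dx}$ of $\mathcal{L}$. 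The cross term $2G_{mm_x}\tilde{m}\tilde{m}_x$ is symmetrized via integration by parts into an algebraic coefficient $-\partial_x G_{mm_x}$ times $\tilde{m}^2$.

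Third, collect all algebraic quadratic coefficients from $\hat{F}_1$, from $G_{mm}$, and from the symmetrized cross term. Using the relations $\mu' = b\mu\phi'/(c-\phi)$ and the corresponding one for $\mu''$ derived in the proof of Lemma \ref{lem-EL}, together with the quadrature (\ref{first-order}) to eliminate $(\phi')^2$ in favor of $\phi$, $a$, and $g$, and finally inserting the multiplier formulas of Lemma \ref{lem-EL} (in particular $2\omega_1 k^{1/b} = (b-1)c + (b+1)k$), the algebraic part should simplify to precisely the four remaining terms in (\ref{second-var-1}). The $\phi$- and derivative-independent constant in the last summand $-(c-k)[(b-1)c+(b+1)k]/(2(c-\phi)\mu^2)$ emerges from the constant parts of the quadrature combined with the value of $\omega_1$. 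For the remainder $R(\tilde{m})$, Remark \ref{remark-positive} gives $\mu \geq \hat{C}_-(b,c,k) > 0$ and $\mu$ bounded, and the embedding $H^1(\mathbb{R}) \hookrightarrow L^\infty(\mathbb{R})$ ensures that if $\|\tilde{m}\|_{H^1} \leq \varepsilon_0$ is small enough, then $\mu + \tilde{m}$ stays in a fixed compact subset of $(0,\infty)$ on which the integrands are smooth in $(m,m_x)$. Taylor's theorem with integral remainder then yields a pointwise cubic bound involving $\tilde{m}^3$, $\tilde{m}^2\tilde{m}_x$, $\tilde{m}\tilde{m}_x^2$, and $\tilde{m}_x^3$; each term is controlled by $\|\tilde{m}\|_{H^1}^3$ via Hölder and the Banach algebra property of $H^1(\mathbb{R})$.

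The main obstacle is the algebraic bookkeeping in the third step: three distinct sources of algebraic coefficients must be combined, and the resulting expressions (which contain $\mu$, $\mu'$, $\mu''$, $\phi$, $(\phi')^2$, $\omega_1$, $\omega_2$, $a$, $g$, and $k$) must be reduced via the quadrature and the multiplier identities to the compact form stated in (\ref{second-var-1}). The differential piece and the cubic remainder estimate are routine by comparison.
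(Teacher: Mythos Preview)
Your proposal is correct and follows essentially the same approach as the paper's own proof: Taylor expand $\Lambda_{\omega_1,\omega_2}$ at $\mu$, use the relation $\mu = a/(c-\phi)^b$ from Remark~\ref{remark-momentum} and the multiplier formulas (\ref{multipliers-relation}) to simplify the second variation into the stated form of $\mathcal{L}$, and bound the cubic remainder via the Banach algebra property of $H^1(\mathbb{R})$ together with the strict positivity bounds from Remark~\ref{remark-positive}. The paper's proof is in fact terser than yours, merely citing these relations and the Banach algebra property without spelling out the intermediate algebraic steps or the integration-by-parts symmetrization you describe.
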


\begin{proof}
The expression for $\mathcal{L}$ is obtained by straightforward computations with the use of relations (\ref{muphi}) and (\ref{multipliers-relation}). Coefficients of the Sturm--Liouville operator $\mathcal{L}$ are smooth and bounded since $(c-\phi), \mu$ are strictly positive, bounded, and smooth on $\mathbb{R}$ by Lemma \ref{lem-trav} and Remark \ref{remark-positive}. 

Similarly, $R(\tilde{m})$ is computed by using the Taylor expansion of the energy densities in $\hat{F}_1(m)$ and $\hat{F}_2(m)$ at $m = \mu$. The leading-order term in $R(\tilde{m})$ is cubic and the Sobolev space $H^1(\mathbb{R})$ forms a Banach algebra with respect to multiplication so that the estimate $\| R(\tilde{m}) \|_{H^1} \leq C_0  \|\tilde{m}\|_{H^1}^3$ follows from the Taylor expansion of $\hat{F}_1(m)$, $\hat{F}_2(m)$ and the smallness of  $\| \tilde{m} \|_{H^1}$. 
\end{proof}

\begin{corollary}
	\label{cor-constraint}
	Let $\tilde{m} \in H^1(\mathbb{R})$ be a small perturbation to $\mu$ such that $m = \mu + \tilde{m} \in X_k$ does not change the conserved quantity 
	$\hat{F}(m) := b \hat{F}_1(m)- k^{\frac{1}{b}-1} \hat{E}(m)$ up to the first order. Then, $\tilde{m} \in H^1(\mathbb{R})$ satisfies the constraint 
\begin{equation}
\label{equn}
\langle \mu^{\frac{1}{b}-1}-{k^{\frac{1}{b}-1}}, \tilde{m} \rangle = 0.
\end{equation}
\end{corollary}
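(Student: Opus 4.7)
The plan is to identify the constraint (\ref{equn}) as the statement that the first Fr\'echet derivative of $\hat{F}$ at $\mu$, applied to $\tilde m$, vanishes. To that end I write $\hat{F}(m) = \int_{\mathbb{R}} G(m)\,dx$ with
$$
G(s) := b s^{\frac{1}{b}} - k^{\frac{1}{b}-1}(s-k) - b k^{\frac{1}{b}},
$$
and observe the two identities $G(k)=0$ and $G'(k) = k^{\frac{1}{b}-1} - k^{\frac{1}{b}-1} = 0$. This double vanishing at $s=k$ is precisely what makes the specific combination $\hat F = b\hat F_1 - k^{\frac{1}{b}-1}\hat E$ well-defined on $X_k$: where $\mu\to k$ exponentially (Lemma \ref{lem-trav}), the integrand $G(\mu)$ is $O((\mu-k)^2)$ and hence lies in $L^1(\mathbb{R})$.

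Next I would expand $G(\mu + \tilde m)$ around $\mu$ by Taylor's theorem with integral remainder. The linear coefficient is
$$
G'(\mu) = \mu^{\frac{1}{b}-1} - k^{\frac{1}{b}-1},
$$
and by the mean value inequality together with the uniform bounds $0<\hat C_- \le \mu \le \hat C_+$ from Remark \ref{remark-positive}, one has $|G'(\mu)(x)| \le C\,|\mu(x)-k|$, so $G'(\mu)\in L^2(\mathbb{R})\cap L^\infty(\mathbb{R})$; in particular the pairing $\langle G'(\mu),\tilde m\rangle$ is finite and continuous in $\tilde m\in H^1(\mathbb{R})\hookrightarrow L^2(\mathbb{R})$. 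The higher-order remainder is handled exactly as in the proof of Corollary \ref{cor-second-var}: since $\mu$ is bounded away from $0$, $\mu+\tilde m$ remains strictly positive for $\|\tilde m\|_{H^1}$ sufficiently small, and the Banach algebra property of $H^1(\mathbb{R})$ bounds the remainder $\int_{\mathbb{R}} \tfrac12 G''(\xi)\tilde m^2\,dx$ by $C\|\tilde m\|_{H^1}^2$.

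Combining these steps yields
$$
\hat F(\mu+\tilde m) - \hat F(\mu) = \langle \mu^{\frac{1}{b}-1} - k^{\frac{1}{b}-1},\, \tilde m\rangle + O\!\left(\|\tilde m\|_{H^1}^2\right),
$$
so the hypothesis that $\hat F$ is unchanged \emph{to first order} in $\tilde m$ is exactly the statement that the linear term on the right vanishes, which is (\ref{equn}).

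I do not anticipate any serious obstacle: this is a standard first-variation calculation. The one subtle point is that neither $\hat E$ nor $\hat F_1$ has a first variation that acts individually on a general $\tilde m \in H^1(\mathbb{R})$, because $\tilde m$ is not required to be integrable; one has to combine them first, and the coefficient $k^{\frac{1}{b}-1}$ in the definition of $\hat F$ is uniquely dictated by the cancellation requirement $G'(k) = 0$ that ensures the linear density $G'(\mu)$ decays at infinity and pairs against $\tilde m$ in $L^2$.
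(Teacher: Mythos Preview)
Your proposal is correct and follows essentially the same approach as the paper: compute the first variation of $\hat F$ at $\mu$, observe that it equals $\mu^{\frac{1}{b}-1}-k^{\frac{1}{b}-1}$ which decays to zero at infinity so that the pairing with $\tilde m\in H^1(\mathbb{R})$ is well-defined, and identify the vanishing of this linear term with the constraint (\ref{equn}). Your version is more detailed---you package the integrand as $G$, make the double vanishing $G(k)=G'(k)=0$ explicit, and control the Taylor remainder---whereas the paper's proof is a two-sentence sketch that computes $\hat F_1'(\mu)$ and $\hat E'(\mu)$ formally before combining; your remark that the individual variations do not act on a general $\tilde m\in H^1(\mathbb{R})$ is a worthwhile clarification of exactly this point.
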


\begin{proof}
	Since $\hat{F}'_1(\mu) = b^{-1} \mu^{\frac{1}{b}-1}$ and $\hat{E}'(\mu) = 1$, the linear combination in $\hat{F}(m)$ is chosen in such way 
	that $\hat{F}'(\mu) = \mu^{\frac{1}{b}-1}-{k^{\frac{1}{b}-1}}$ decays to zero at infinity exponentially fast. Then, the constraint (\ref{equn}) 
	is well-defined for every $\tilde{m}\in H^1(\mathbb{R})$ and expresses fixed constraint of $\hat{F}(m)$ up to the first order. 
\end{proof}

\begin{remark}
	\label{rem-equiv}
	It follows from \eqref{ag} and \eqref{muphi} that 
	\begin{equation}
	\label{mu-phi-relation}
	\mu = k \left( \frac{c-k}{c-\phi} \right)^b.
	\end{equation}
As a result, the constraint (\ref{equn}) can be equivalently written as 
\begin{equation}
\label{eqnu-equivalent}
	\langle  (c-\phi)^{b-1}-(c-k)^{b-1}, \tilde{m} \rangle = 0,
\end{equation}
where $(c-\phi)^{b-1}-(c-k)^{b-1}$ decays to zero at infinity exponentially fast.
\end{remark}

\begin{remark}
The linear combination of conserved quantities $\hat{E}$, $\hat{F}_1$ and $\hat{F}_2$ in the construction of $\hat{F}$ is not uniquely defined. However, first variations of conserved quantities must be well defined due to nonzero boundary conditions by Remark \ref{remark-var}. This implies that other well-defined linear combinations of conserved quantities give the constraint which is proportional to the constraint in (\ref{equn}).
\end{remark}

\section{Stability criterion}
\label{sec-stab}

The results of Lemma \ref{lem-EL} suggest that the smooth solitary wave of Lemma \ref{lem-trav} with the profile $\mu = \phi - \phi''$ is a critical point of the action functional $\Lambda_{\omega_1,\omega_2}$ with uniquely selected parameters $(\omega_1,\omega_2)$. In order to prove Theorem \ref{theorem-main}, 
we need to derive the criterion for the critical point to be a local non-degenerate minimizer of $\Lambda_{\omega_1,\omega_2}$ subject to the fixed 
value of another conserved quantity $\hat{F}$. This is done by using the second derivative test, which relies on Corollaries \ref{cor-second-var} and  \ref{cor-constraint}. We will show that the Hessian operator $\mathcal{L}$ 
of the action functional $\Lambda_{\omega_1,\omega_2}$ defined by  (\ref{expansion-lambda}) and (\ref{second-var-1}) has exactly one simple negative eigenvalue and a simple zero eigenvalue isolated from the rest of the spectrum. Then, we add the constraint (\ref{equn}) in order to derive a precise condition when the Hessian operator is positive under the constraint with the only degeneracy due to the translational symmetry. 

The following lemma gives the spectral properties of the linear operator $\mathcal{L}$. 

\begin{lemma}
	\label{lem-operator}
The linear operator $\mathcal{L}$ defined by (\ref{second-var-1}) is extended as a self-adjoint operator in $L^2(\mathbb{R})$ with the dense domain 	$H^2(\mathbb{R}) \subset L^2(\mathbb{R})$. There exists $\delta > 0$ 
such that the spectrum of $\mathcal{L}$ in $(-\infty,\delta)$ consists of a simple zero eigenvalue and a simple negative eigenvalue. 
\end{lemma}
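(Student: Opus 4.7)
The plan is to analyze $\mathcal{L}$ as a standard one-dimensional Sturm--Liouville operator on $\mathbb{R}$, exploiting the fact that its coefficients are smooth, bounded, and approach explicit constant limits at infinity by Lemma~\ref{lem-trav} and Remark~\ref{remark-positive}. First I would record self-adjointness: the leading coefficient $(c-\phi)/\mu^2$ is uniformly bounded above and below away from zero by Remark~\ref{remark-positive}, and the zeroth-order potential is smooth and bounded on $\mathbb{R}$, so standard Sturm--Liouville theory identifies $\mathcal{L}$ with a self-adjoint operator on the dense domain $H^2(\mathbb{R}) \subset L^2(\mathbb{R})$.

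Next I would locate the essential spectrum by passing to the limit $|x|\to\infty$. Since the equilibrium $\mu=k$ of the traveling-wave ODE is hyperbolic, $\phi-k$ and $\mu-k$ (together with their derivatives) decay exponentially, so the coefficients of $\mathcal{L}$ converge exponentially fast to constants. A routine simplification yields the constant-coefficient limit
\[
\mathcal{L}_\infty = -\frac{c-k}{k^2}\frac{d^2}{dx^2} + V_\infty, \qquad V_\infty = \frac{c-(b+1)k}{k^2} > 0,
\]
where positivity uses $k<(b+1)^{-1}c$. Because $\mathcal{L}-\mathcal{L}_\infty$ is relatively compact (its coefficients decay exponentially), Weyl's theorem on the invariance of essential spectrum gives $\sigma_{\mathrm{ess}}(\mathcal{L}) = [V_\infty,\infty)$. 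Choosing any $\delta\in(0,V_\infty)$, the spectrum of $\mathcal{L}$ in $(-\infty,\delta)$ is purely discrete.

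To produce the zero eigenvalue, I would use translational invariance of $\Lambda_{\omega_1,\omega_2}$: the Euler--Lagrange equation $\Lambda'_{\omega_1,\omega_2}(\mu)=0$ holds for every $x$-translate of $\mu$, and differentiation in $x$ yields $\mathcal{L}\mu'=0$. Exponential decay of $\mu-k$ at infinity gives $\mu'\in H^2(\mathbb{R})$, so $0$ is a genuine eigenvalue with eigenfunction $\mu'$. To count negative eigenvalues I would appeal to one-dimensional Sturm oscillation theory: Lemma~\ref{lem-trav} characterizes $\phi$ as a single-humped homoclinic orbit with $\phi'$ vanishing only at $x=0$, and relation (\ref{muphi}) shows that $\mu=a(c-\phi)^{-b}$ is a strictly monotone function of $\phi$, so $\mu'$ likewise has a single zero at $x=0$. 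Since the $n$-th discrete eigenfunction of a Sturm--Liouville operator on $\mathbb{R}$ (below the essential spectrum) has exactly $n-1$ zeros, $\mu'$ must be the second eigenfunction. Hence $0$ is simple and there is exactly one simple negative eigenvalue $\lambda_1<0$ below it, both isolated in $(-\infty,\delta)$.

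The main technical delicacy is the rigorous use of Sturm oscillation counting on the whole line in the presence of essential spectrum. The cleanest route is to truncate to Dirichlet problems on $(-L,L)$, where nodal count is classical, and pass to $L\to\infty$ using that $\mu'$ decays exponentially and that $V_\infty>0$ prevents eigenvalues from escaping into the essential spectrum; an equivalent approach is the min--max principle applied to the two-dimensional span of $\mu'$ and a suitable nodeless trial function to rule out a second negative eigenvalue. Self-adjointness and the essential-spectrum computation are routine once the uniform bounds from Remark~\ref{remark-positive} and the hyperbolic decay to the background are in hand.
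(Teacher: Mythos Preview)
Your proposal is correct and follows essentially the same route as the paper: self-adjointness from the uniform bounds on the coefficients, essential spectrum via Weyl's theorem and the constant-coefficient limit $\mathcal{L}_\infty$, the kernel element $\mu'$ from translation invariance, and Sturm oscillation to count the single negative eigenvalue from the single zero of $\mu'$. The only minor difference is that the paper records simplicity of each isolated eigenvalue directly via the Liouville formula for the Wronskian, whereas you fold this into the general Sturm--Liouville framework; your extra remarks on justifying nodal counting on the whole line are, if anything, more careful than the paper's brief appeal to Sturm's theorem.
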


\begin{proof}
The linear operator $\mathcal{L}$ defined by (\ref{second-var-1}) belongs to the class of self-adjoint Sturm--Liouville operators in $L^2(\mathbb{R})$ with the dense domain $H^2(\mathbb{R})$. Since coefficients of $\mathcal{L}$ are smooth and bounded, see the proof of Corollary \ref{cor-second-var}, standard properties of the Sturm--Liouville operators hold.  	

Since $\mu(x) \to k$ as $|x| \to \infty$ exponentially fast, Weyl's Lemma states that the essential spectrum of $\mathcal{L}$ is given by the essential spectrum of the linear operator with constant coefficients $\mathcal{L}_{\infty}$ given by 
\begin{eqnarray}
\mathcal{L}_{\infty} = -\frac{c-k}{k^2} \frac{d^2}{dx^2} 
+ \frac{c-(b+1)k}{k^2}.
\label{second-var-2}
\end{eqnarray}
Since $c > (b+1)k$, the spectrum of $\mathcal{L}_{\infty}$ 
coincides with 
$$
\Sigma_{\infty} := [k^{-2} (c-(b+1)k),\infty),
$$ 
which is strictly positive.  
Hence, the spectrum of the self-adjoint operator $\mathcal{L}$ in 
$\mathbb{R}\backslash \Sigma_{\infty}$ 
consists of isolated semi-simple eigenvalues of finite multiplicity. Moreover, each eigenvalue is simple because the Wronskian $W(v_1,v_2)$ of two solutions $v_1$, $v_2$ of the second-order differential equation $\mathcal{L} v = \lambda v$ satisfies the Liouville formula, 
$$
W(v_1,v_2) = \frac{W_0 \mu^2}{c - \phi},
$$
where $W_0 \neq 0$ is constant so that if $v_1 \in H^2(\mathbb{R})$ 
decays to zero as $|x| \to \infty$, then $v_2 \notin H^2(\mathbb{R})$ diverges to infinity as $|x| \to \infty$.

It remains to characterize the zero and negative eigenvalues of $\mathcal{L}$. 
Due to the translation symmetry of the $b$-CH equation (\ref{bCHm}), 
$\mu' \in H^2(\mathbb{R})$ belongs to the kernel of $\mathcal{L}$ 
so that $0$ is in the spectrum of $\mathcal{L}$. Sturm's Oscillation Theorem 
states that the $n$-th simple eigenvalue corresponds to the eigenfunction with $(n-1)$ simple zeros on $\mathbb{R}$. Since 
$\mu' = \frac{b\mu}{c-\phi} \phi'$ has only one zero on $\mathbb{R}$, $0$ is the second eigenvalue of $\mathcal{L}$ and there exists only one simple negative eigenvalue.
\end{proof} 

In order to prove the next lemma, we first derive formal relations on the family of solutions in the differential equations (\ref{second-order}) and (\ref{first-order}). Solutions depend on three parameters $g$, $a$, and $c$, which are considered independently for the moment.
Formal differentiating of $(c-\phi)^b \mu = a$ in $g$, $a$, and $c$ yields
\begin{equation}
\label{parder}
\partial_g \mu = \frac{b \mu \partial_g \phi}{c-\phi}, \qquad 
\partial_a \mu = \frac{b \mu \partial_a \phi}{c-\phi} + \frac{\mu}{a}, \quad 
\partial_c \mu = \frac{b \mu \partial_c \phi}{c-\phi} - \frac{b\mu}{c-\phi}.
\end{equation}
We note the following result due to the scaling transformation.

\begin{proposition}
	\label{prop-symm-1}
Let $\mu$ be defined by (\ref{muphi}) and assume that $\mu$ is smooth 
with respect to parameters $g$, $a$, and $c$. Then, it satisfies
	\begin{equation}
\label{scaling2}
c \partial_c \mu + (b+1) a \partial_a \mu + 2 g \partial_g \mu = \mu.
\end{equation}
\end{proposition}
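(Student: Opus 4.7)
The plan is to exploit a scaling invariance of the defining equations. Under the one-parameter group of dilations
$$(\phi, c, a, g) \mapsto (\lambda \phi, \lambda c, \lambda^{b+1} a, \lambda^2 g), \qquad \lambda > 0,$$
each term in the second-order equation (\ref{second-order}) scales by a common factor of $\lambda^{b+1}$, while each term in the first-order quadrature (\ref{first-order}) scales by a common factor of $\lambda^2$. Hence the system (\ref{second-order})--(\ref{first-order}) is invariant under the scaling, and by the uniqueness statement in Lemma \ref{lem-trav} (with the normalization $\phi'(0)=0$), the solitary wave profile transforms by $\phi \mapsto \lambda \phi$. Therefore $\phi$ is a homogeneous function of $(c, a, g)$ of degree one with weights $(1, b+1, 2)$. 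Differentiating this homogeneity relation at $\lambda = 1$ yields the intermediate Euler identity
\begin{equation}
\label{scaling-phi}
c \partial_c \phi + (b+1) a \partial_a \phi + 2 g \partial_g \phi = \phi.
\end{equation}

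From here, a direct computation using the formulas (\ref{parder}) finishes the argument. Forming the indicated linear combination of the three relations in (\ref{parder}) gives
\begin{equation}
c \partial_c \mu + (b+1) a \partial_a \mu + 2 g \partial_g \mu = \frac{b\mu}{c - \phi}\bigl[c \partial_c \phi + (b+1) a \partial_a \phi + 2 g \partial_g \phi - c\bigr] + (b+1)\mu. \nonumber
\end{equation}
Substituting (\ref{scaling-phi}) into the bracket collapses the right-hand side to $\frac{b\mu(\phi - c)}{c-\phi} + (b+1)\mu = -b\mu + (b+1)\mu = \mu$, which is precisely (\ref{scaling2}).

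Alternatively, one can bypass (\ref{scaling-phi}) entirely by noting that (\ref{muphi}) combined with $\phi \mapsto \lambda \phi$ yields $\mu \mapsto \lambda \mu$ under the dilation, so $\mu$ is itself homogeneous of degree one in $(c,a,g)$ with weights $(1, b+1, 2)$, and (\ref{scaling2}) is the corresponding Euler identity.

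The only substantive point requiring attention is the verification that the parameters $g$, $a$, $c$ may be treated as genuinely independent when taking partial derivatives, even though on the solitary wave manifold they are tied by (\ref{ag}); this is legitimate because (\ref{second-order})--(\ref{first-order}) make sense for arbitrary $(g, a, c)$ and the profile $\phi$ depends smoothly on all three parameters in a neighbourhood of the solitary wave values, as established in Lemma \ref{lem-trav}. No further analytic obstruction arises: the content of the proposition is essentially a bookkeeping of scaling weights, and the hardest step is simply the mechanical check that the three terms in each of (\ref{second-order}) and (\ref{first-order}) carry the announced common weight.
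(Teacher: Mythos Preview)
Your proof is correct and follows essentially the same approach as the paper: both exploit the scaling invariance $(\phi,c,a,g)\mapsto(\lambda\phi,\lambda c,\lambda^{b+1}a,\lambda^2 g)$ of the system (\ref{second-order})--(\ref{first-order}) to obtain the Euler identity (\ref{scaling1}) for $\phi$, and then pass to $\mu$ via the relations (\ref{parder}). The paper phrases the scaling as $\phi(x;g,a,c)=c\,\varphi(x;g/c^2,a/c^{b+1})$ rather than as a one-parameter dilation group, but the content is identical.
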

	
\begin{proof}	
	Solutions $\phi(x;g,a,c)$ of the system of differential equations (\ref{second-order}) and (\ref{first-order}) enjoy the scaling transformation:
	\begin{equation}
	\label{scaling}
	\phi(x;g,a,c) = c \varphi(x;\gamma,\alpha), \quad g = c^2 \gamma, \quad a = c^{b+1} \alpha,
	\end{equation}
	where $\varphi$, $\gamma$, and $\alpha$ are independent of $c$. It follows from (\ref{scaling}) for a smooth $\phi(x;g,a,c)$ that 
	\begin{equation}
	\label{scaling1}
	c \partial_c \phi + (b+1) a \partial_a \phi + 2 g \partial_g \phi = \phi.
	\end{equation}	
The relation (\ref{scaling2}) follows from (\ref{scaling1}) by using (\ref{parder}).
\end{proof}

Long but straightfoward computations based on the explicit expression (\ref{second-var-1}) with 
$$
(c-k)[(b-1)c + (b+1)k] = 2g + (b-1)c^2
$$ 
yield
\begin{equation}
\label{range-1}
\mathcal{L} \partial_g \mu = \frac{b}{a{(1-b)}} (c-\phi)^{b-1},
\end{equation}
\begin{equation}
\label{range-2}
\mathcal{L} \partial_a \mu = \frac{2g + (b-1)c^2}{a^2 (b-1)} (c-\phi)^{b-1} - \frac{b}{a(b-1)},
\end{equation}
and 
\begin{equation}
\label{range-3}
\mathcal{L} \partial_c \mu = -\frac{bc}{a} (c-\phi)^{b-1},
\end{equation}
where $b \neq 1$ and $a \neq 0$ are assumed. It follows from (\ref{scaling2}) with the help of (\ref{range-1}), (\ref{range-2}), and (\ref{range-3}) that 
\begin{equation}
\label{range}
\mathcal{L}\mu = \frac{2g + (b-1) c^2}{a(b-1)} (c-\phi)^{b-1} - \frac{b(b+1)}{b-1}.
\end{equation}

We also acknowledge the obvious translation symmetry of the $b$-CH equation (\ref{bCHm}).
\begin{proposition}
	\label{prop-symm-2}
	Let $\mu(t,x)$ be a solution of the $b$-CH equation (\ref{bCHm}). Then 
$\mu(t+t_0,x+x_0)$ is also a solution of the $b$-CH equation (\ref{bCHm}) for every $t_0 \in \mathbb{R}$ and $x_0 \in \mathbb{R}$.
\end{proposition}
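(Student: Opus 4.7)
The proposition asserts that equation (\ref{bCHm}) is invariant under constant space--time translations, and my plan is simply to verify this by a direct chain rule computation. I would first introduce the translated fields $\tilde{\mu}(t,x) := \mu(t+t_0, x+x_0)$ and $\tilde{u}(t,x) := u(t+t_0, x+x_0)$, where $u$ is recovered from $\mu$ through the Helmholtz relation $\mu = u - u_{xx}$. Since $t_0$ and $x_0$ are constants, every partial derivative commutes with the translation, giving $\partial_t \tilde{\mu}(t,x) = (\partial_t \mu)(t+t_0, x+x_0)$ and $\partial_x^k \tilde{\mu}(t,x) = (\partial_x^k \mu)(t+t_0, x+x_0)$ for every $k \geq 0$, with analogous identities for $\tilde{u}$.

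Next I would observe that $\tilde{\mu} = \tilde{u} - \tilde{u}_{xx}$ follows immediately from the identities above, so that $(\tilde{\mu}, \tilde{u})$ is an admissible pair on which equation (\ref{bCHm}) can be evaluated. Substituting the translated fields into the left-hand side of (\ref{bCHm}) yields
\begin{equation}
\tilde{\mu}_t(t,x) + \tilde{u}(t,x)\tilde{\mu}_x(t,x) + b\, \tilde{\mu}(t,x)\tilde{u}_x(t,x) = \bigl(\mu_t + u\mu_x + b\mu u_x\bigr)(t+t_0,\, x+x_0),
\end{equation}
which vanishes identically because $\mu$ is assumed to solve (\ref{bCHm}) at every space--time point.

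There is no substantive obstacle here: the conclusion is a one-line consequence of the fact that neither $t$ nor $x$ appears explicitly in the coefficients of (\ref{bCHm}), so the equation is autonomous in both variables. The proposition is recorded mainly because translation invariance provides the group action relative to which orbital stability is formulated in Definition \ref{def-main}, and it complements the scaling-type identity of Proposition \ref{prop-symm-1} that enters the subsequent stability analysis.
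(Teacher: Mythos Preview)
Your proposal is correct and follows essentially the same approach as the paper: the paper's own proof is the one-line observation that equation (\ref{bCHm}) has constant coefficients in $(t,x)$, and your chain-rule verification is simply an explicit unpacking of that remark.
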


\begin{proof}
	The proof is immediate since the $b$-CH equation (\ref{bCHm}) has constant coefficients in $(t,x)$.
\end{proof}

We are now ready to prove the following lemma which gives the precise condition for the smooth solitary wave with the profile $\mu$ to become a constrained minimizer of the action functional $\Lambda_{\omega_1,\omega_2}$ given by (\ref{action-1}) and (\ref{multipliers-relation}).

\begin{lemma}
	\label{lem-minimizer}
	The solitary wave with profile $\mu$ is a local constrained minimizer of $\Lambda_{\omega_1,\omega}$ given by (\ref{action-1}) and (\ref{multipliers-relation}) under the constraint (\ref{equn}) if and only if the mapping 
\begin{equation}
\label{map}
	k \mapsto Q(\phi) := \int_{\mathbb{R}} 
	\left[ b \left( \frac{c-k}{c-\phi} \right) - \left(\frac{c-k}{c-\phi} \right)^{b} -b+1 \right]   dx
\end{equation}
is increasing. The minimizer is only degenerate due to the translational symmetry of the $b$-CH equation (\ref{bCHm}) if and only if the mapping (\ref{map}) is strictly increasing.
\end{lemma}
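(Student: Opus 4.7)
The plan is to apply the standard index-counting machinery for constrained quadratic forms to the self-adjoint operator $\mathcal{L}$ on the hyperplane $\{\hat F'(\mu)\}^\perp$, and then identify the resulting Vakhitov--Kolokolov scalar with $\frac{dQ}{dk}$ using the parametrization identities (\ref{range-1})--(\ref{range}) and the scaling relation (\ref{scaling2}). By Lemma \ref{lem-operator}, $\mathcal{L}$ has exactly one simple negative eigenvalue, a simple zero eigenvalue with eigenfunction $\mu'$, and the remainder of its spectrum bounded below by a positive constant. Translation invariance of $\hat F$ forces $\langle \hat F'(\mu),\mu'\rangle = 0$, so $\hat F'(\mu)$ lies in $\mathrm{Range}(\mathcal{L})$. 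The classical lemma on symmetric bilinear forms restricted to a codimension-one subspace then gives: $\langle \mathcal{L}\tilde m,\tilde m\rangle \geq 0$ on $\{\hat F'(\mu)\}^\perp$ with kernel exactly $\mathrm{span}\{\mu'\}$ if and only if the scalar $M := \langle \mathcal{L}^{-1}\hat F'(\mu),\hat F'(\mu)\rangle$ is negative. Combining this with the cubic remainder estimate of Corollary \ref{cor-second-var}, a standard coercivity argument on a small $H^1$-ball upgrades the quadratic lower bound to a genuine local constrained minimizer, strict away from translations exactly when $M<0$.

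The substantive content is to compute the sign of $M$ and match it with the monotonicity of $Q$. By Remark \ref{rem-equiv}, $\hat F'(\mu)$ is a strictly positive scalar multiple of $f := (c-\phi)^{b-1} - (c-k)^{b-1}$, which decays exponentially. Using (\ref{range-1}) and (\ref{range-2}), I would determine scalars $\alpha(k),\beta(k)$ for which the combination $w := \alpha\,\partial_g\mu + \beta\,\partial_a\mu$ satisfies $\mathcal{L} w = \kappa f$ with an explicit nonzero constant $\kappa(k)$. Although $\partial_g\mu$ and $\partial_a\mu$ have nonzero constant limits at infinity and so do not lie in $L^2(\mathbb{R})$, the pairing $\langle w, f\rangle$ is absolutely convergent because $f$ decays; and since $\langle\mu',f\rangle = 0$, the possible kernel correction needed to land $w$ in $(\ker\mathcal{L})^\perp$ does not affect the pairing. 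Consequently $\langle\mathcal{L}^{-1}\hat F'(\mu),\hat F'(\mu)\rangle$ is a positive scalar multiple of $\kappa^{-1}\langle w, f\rangle$.

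Finally, I would identify $\langle w, f\rangle$ with a positive multiple of $\frac{dQ}{dk}$. Starting from the algebraic identity $Q(\phi) = k^{-1/b}\hat F(\mu)$, which follows from (\ref{mu-phi-relation}) together with the definition $\hat F = b\hat F_1 - k^{1/b-1}\hat E$, I would differentiate in $k$, producing a bulk pairing $\langle \hat F'(\mu),\partial_k\mu\rangle$ plus explicit corrections from the $k$-dependence of the background densities. Using $\partial_k\mu = (\partial_k a)\partial_a\mu + (\partial_k g)\partial_g\mu$ together with (\ref{ag}), and invoking the scaling relation (\ref{scaling2}) of Proposition \ref{prop-symm-1} to eliminate any $\partial_c\mu$ contribution, I would reduce the $k$-derivative of $Q$ to a scalar multiple of $\langle w,f\rangle$, with the sign arranged so that $\frac{dQ}{dk} > 0$ corresponds to $M < 0$. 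The main obstacle is executing this matching cleanly: because the identities (\ref{range-1})--(\ref{range-3}) involve non-$L^2$ functions, every integration by parts generates boundary contributions that must be tracked and shown to cancel against the constant pieces $k^{1/b}$, $k^{1/b-1}$ subtracted in the renormalized densities $\hat F_1,\hat F$. Once this cancellation is verified, the strict version $\frac{dQ}{dk}>0$ translates to strict negativity of $M$ and hence to the claimed non-degeneracy modulo translations.
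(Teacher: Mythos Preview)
Your overall architecture---Vakhitov--Kolokolov index counting on $\{\hat F'(\mu)\}^\perp$, then identifying the sign of $\langle \mathcal{L}^{-1}v_0,v_0\rangle$ with that of $dQ/dk$---matches the paper exactly, and your observation that $Q(\phi)=k^{-1/b}\hat F(\mu)$ is correct and pleasant. The execution, however, has a real gap at the preimage step.

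You assert that $\partial_g\mu$ and $\partial_a\mu$ ``have nonzero constant limits at infinity.'' They do not: these are derivatives \emph{off} the homoclinic manifold (holding the other two of $(g,a,c)$ fixed pushes you onto the nearby periodic family), and such derivatives grow like $e^{\lambda|x|}$ with the same rate $\lambda$ at which $\phi-k$ decays. The one combination along which the growing parts cancel is precisely $(\partial_k g)\partial_g\mu+(\partial_k a)\partial_a\mu=\partial_k\mu$, because this stays tangent to the solitary-wave family; but $\mathcal{L}\partial_k\mu$ does \emph{not} equal a multiple of $f$ (its limit at infinity is the nonzero constant $(c-(b+1)k)/k^2$, see (\ref{Lmuk})). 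Consequently the pair $(\alpha,\beta)$ you select by forcing $\mathcal{L}w=\kappa f$ is \emph{not} proportional to $(\partial_k g,\partial_k a)$, the exponential pieces of $\partial_g\mu,\partial_a\mu$ do not cancel in $w$, and the pairing $\langle w,f\rangle$ integrates an $O(1)$ tail and diverges. The ``boundary contributions'' you worry about are therefore not mere constants to be matched against the $k^{1/b}$ subtractions---they are genuinely divergent.

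The paper repairs this by bringing in (\ref{range}) for $\mathcal{L}\mu$ (equivalently $\partial_c\mu$ via the scaling relation) and forming $\partial_k\mu-k^{-1}\mu$. Both pieces are bounded with limit $1$ at infinity, so this combination decays and is a bona fide $L^2$ preimage: $\mathcal{L}(\partial_k\mu-k^{-1}\mu)=-\tfrac{cb^2}{a(b-1)}\,v_0$. From there the paper computes $\langle \mathcal{L}^{-1}v_0,v_0\rangle$ directly by writing the integrand through (\ref{mu-phi-relation}) and pulling $\partial_k$ outside, arriving at $-\tfrac{ak(b-1)}{cb^2}(c-k)^{b-1}\,\partial_k Q(\phi)$ without ever passing through $\hat F$. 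Your route via $Q=k^{-1/b}\hat F(\mu)$ can be made to work, but only after you replace $w$ by $\partial_k\mu-k^{-1}\mu$; the invocation of (\ref{scaling2}) ``to eliminate any $\partial_c\mu$ contribution'' is a red herring, since $c$ is fixed and what you actually need the scaling relation for is to produce the $\mu$-term that makes the preimage decay.
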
 

\begin{proof}
Under the spectral properties of Lemma \ref{lem-operator}, 
it is well-known since the pionering work \cite{VK} that 
the operator $\mathcal{L}$ is positive definite on a subspace 
of $L^2(\mathbb{R})$ defined by the scalar constraint 
\eqref{eqnu-equivalent} if and only if 
$\langle \mathcal{L}^{-1} v_0, v_0 \rangle \leq 0$, where
\begin{equation}
\label{condn}
v_0 := (c-\phi)^{b-1}-(c-k)^{b-1}.
\end{equation}
Moreover, the operator $\mathcal{L}$ is strictly positive definite in 
\begin{equation}
\label{constrained-L2}
L^2_c := \{ \tilde{m} \in L^2(\mathbb{R}) : \quad \langle v_0, \tilde{m} \rangle = 0, \quad \langle \mu', \tilde{m}\rangle = 0 \}
\end{equation}
if and only if $\langle \mathcal{L}^{-1} v_0, v_0 \rangle < 0$. 
It remains to show that the sign of $\langle \mathcal{L}^{-1} v_0, v_0 \rangle$ 
is opposite to the sign of the derivative of the mapping (\ref{map}).

Let us recall that parameters $g$ and $a$ are expressed in terms 
of $k$ for fixed $b > 1$ and $c > 0$ from (\ref{ag}) and that 
the family of solitary waves is smooth with respect to parameter $k$. Differentiating $\mu$
with respect to $k$ yields with the chain rule that
\eq{
\partial_k \mu = (c - (b+1)k) \left[ \partial_g \mu + (c-k)^{b-1} \partial_a \mu \right].
}{muk}
From \eqref{muk}, we use (\ref{range-1}), (\ref{range-2}), and 
(\ref{range}) to obtain
\eq{
\mathcal{L}\partial_k \mu = \frac{c - k(b+1)}{ak (b-1)} \left[ (k + c(b-1)) (c-\phi)^{b-1} - bk (c-k)^{b-1} \right],
}{Lmuk}
and
\eq{
\mathcal{L} \mu = \frac{c-k}{a(b-1)} \left[(k(b+1) + c(b-1)) (c-\phi)^{b-1} -bk(b+1)(c-k)^{b-1} \right].
}{Lmu}
Since $\mu(x) \to k$ as $|x| \to \infty$ exponentially fast, then $\partial_k \mu(x) \to 1$ as $|x| \to \infty$, so that $\partial_k \mu(x)$ does not decay to $0$ at infinity. However, $\partial_k \mu(x) - k^{-1} \mu(x)$ does decay to $0$ at infinity, so we use \eqref{Lmuk} and \eqref{Lmu} to compute  
	$$
	\mathcal{L}\left( \partial_k \mu - k^{-1} \mu\right) = 
	-\frac{c b^2}{a(b-1)} \left[ (c-\phi)^{b-1} - (c-k)^{b-1} \right].
	$$
Therefore, 
\eq{\mathcal{L}^{-1} v_0 = -\frac{a(b-1)}{c b^2} (\partial_k \mu - k^{-1} \mu),}{Lmv} 
so that we can compute 
\begin{align*}
\langle \mathcal{L}^{-1} v_0, v_0 \rangle &= -\frac{a(b-1)}{c b^2} \int_{\mathbb{R}} 
\left[ (c-\phi)^{b-1} - (c-k)^{b-1} \right] \left[ \partial_k \mu - \frac{\mu}{k} \right] dx,
\end{align*}
which becomes in view of relation (\ref{mu-phi-relation}), 
\begin{align*}
\langle \mathcal{L}^{-1} v_0, v_0 \rangle &= -\frac{a k(b-1)}{c b} (c-k)^{b-1}  \int_{\mathbb{R}} 
\left[ 1 - \left(\frac{c-k}{c-\phi} \right)^{b-1} \right] \frac{\partial}{\partial k} \left( \frac{c-k}{c-\phi} \right) dx \\
&= -\frac{a k(b-1)}{c b^2} (c-k)^{b-1}  \frac{\partial}{\partial k} \int_{\mathbb{R}} 
\left[ b \left( \frac{c-k}{c-\phi} \right) - \left(\frac{c-k}{c-\phi} \right)^{b} -b+1 \right]   dx,
\end{align*}
where the integrands has been normalized to converge to zero at infinity. 
Since $a > 0$, $b > 1$, and $c > 0$, the sign of $\langle \mathcal{L}^{-1} v_0, v_0 \rangle$ is opposite to the sign of the derivative of the mapping (\ref{map}).
\end{proof}

\begin{remark}
The result of Theorem \ref{theorem-main} follows from 
the second derivative test in Lemma \ref{lem-minimizer}, 
the local well-posedness theory for the $b$-CH equation 
(\ref{bCHm}) in $X_k$, and the orbital stability theory 
pioneered in \cite{GSS}. 
\end{remark}

\section{Verification of the stability criterion}
\label{sec-num}

Here we verify the stability criterion of Theorem \ref{theorem-main}. 
To do so, we perform some transformations and rewrite 
the second-order equation (\ref{CHode}) with $g = kc - \frac{1}{2} (b+1)k^2$ 
as 
\begin{equation}
\label{CH-1}
(c-\phi) (\phi - \phi'') + \frac{1}{2} (b-1) (\phi'^2 - \phi^2)= c k - \frac{1}{2} (b+1) k^2.
\end{equation}
By using the transformation 
(which generalizes the one used in \cite{CS-02} and \cite{GeyerSAPM} for $b = 2$),
\begin{equation}
\label{CH-2}
z = \int_0^x \frac{dx}{[c - \phi(x)]^{\frac{b-1}{2}}}, \quad \phi(x) = \psi(z),
\end{equation}
we obtain from (\ref{CH-1}) the equivalent second-order equation,
\begin{equation}
\label{CH-3}
- \psi''(z) + (\psi - k) (c- \psi)^{b-2} \left[ c - \frac{1}{2} (b+1) (\psi + k) \right] = 0.
\end{equation}
Using the following transformation 
\begin{equation}
\label{CH-4}
\zeta = \sqrt{c - k(b+1)} (c - k)^{\frac{b-2}{2}} z, \quad \psi(z) = k + (c-k) \varphi(\zeta),
\end{equation}
we rewrite the second-order equation (\ref{CH-3}) in the normalized form 
\begin{equation}
\label{varphi-eq}
- \varphi''(\zeta) +  \varphi (1 - \varphi)^{b-2} \left[ 1 - (2\gamma)^{-1} (b+1) \varphi \right] = 0,
\end{equation}
where 
\begin{equation}
\label{gamma}
\gamma := \frac{c - k(b+1)}{c - k}
\end{equation}
is the normalized parameter. It follows that $\gamma \in (0,1)$ if $k \in (0,(b+1)^{-1} c)$. Substituting (\ref{CH-2}) and (\ref{CH-4}) into $Q(\phi)$ 
given by (\ref{map-intro}) yields
\begin{align}
\nonumber
Q(\phi) &= \int_{\mathbb{R}} \left[ b \frac{\phi - k}{c - \phi} 
+ 1 - \left( \frac{c-k}{c-\phi} \right)^b \right] dx \\
\nonumber
&= \int_{\mathbb{R}} \left[ b (\psi - k) (c-\psi)^{\frac{b-3}{2}}
+ (c-\psi)^{\frac{b-1}{2}} -  (c-k)^b (c-\psi)^{-\frac{b+1}{2}} \right] dz  \\
&= \gamma^{-1/2} \int_{\mathbb{R}}  \left[ b \varphi (1 - \varphi)^{\frac{b-3}{2}}
+ (1 - \varphi)^{\frac{b-1}{2}} -  (1 - \varphi)^{-\frac{b+1}{2}} \right] d\zeta.
\label{Q-form}
\end{align}
Although we do not write it explicitly, the wave profile $\varphi$ depends on $\gamma$ since the differential equation (\ref{varphi-eq}) for $\varphi$ depends on $\gamma$. If $b > 1$ and $c > 0$ are fixed, monotonicity of 
the mapping $k \mapsto Q(\phi)$ given by (\ref{map-intro}) is determined 
with the chain rule from 
monotonicity of the mappings $k \mapsto \gamma$ and $\gamma \mapsto Q(\phi)$ given by (\ref{gamma}) and (\ref{Q-form}). Since
$$
\frac{d \gamma}{d k} = \frac{-bc}{(c-k)^2} < 0, 
$$
the mapping (\ref{map-intro}) is strictly increasing if and only if 
\begin{equation}
\label{stab-criterion}
\frac{d}{d\gamma} Q(\phi) < 0.
\end{equation}

The following two lemmas report explicit computations of $Q(\phi)$ 
for the integrable cases $b = 2$ and $b = 3$, from which the stability criterion (\ref{stab-criterion}) can be proven analytically. 

\begin{lemma}
	\label{lemma-CH}
	The mapping (\ref{map-intro}) is strictly increasing for $b = 2$, 
	$c > 0$, and $k \in \left(0,\frac{1}{3} c \right)$.
\end{lemma}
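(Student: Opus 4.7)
The plan is to exploit the integrable structure of the $b = 2$ case to obtain an explicit closed form for the normalized soliton profile $\varphi(\zeta)$ and hence for $Q(\phi)$, after which the stability criterion (\ref{stab-criterion}) reduces to differentiating an elementary function of $\gamma$. By the discussion preceding the lemma, it suffices to show $\frac{dQ}{d\gamma} < 0$ for $\gamma \in (0,1)$, since $\frac{d\gamma}{dk} = -bc(c-k)^{-2} < 0$.

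The first step is to specialize the normalized traveling-wave equation (\ref{varphi-eq}) to $b = 2$, where it reduces to the KdV-type equation $-\varphi'' + \varphi - \frac{3}{2\gamma}\varphi^2 = 0$, whose unique even homoclinic orbit with $\varphi'(0) = 0$ and $\varphi(\zeta) \to 0$ as $|\zeta|\to\infty$ is the sech-squared profile $\varphi(\zeta) = \gamma\,\sech^2(\zeta/2)$. The second step is an algebraic simplification of the integrand of (\ref{Q-form}) at $b=2$: combining $2\varphi(1-\varphi)^{-1/2} + (1-\varphi)^{1/2} = (1+\varphi)(1-\varphi)^{-1/2}$ and subtracting $(1-\varphi)^{-3/2}$ over the common factor $(1-\varphi)^{-3/2}$ gives $(1+\varphi)(1-\varphi) - 1 = -\varphi^2$. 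Hence
\begin{equation}
Q(\phi) \;=\; -\gamma^{-1/2} \int_{\mathbb{R}} \frac{\varphi^2}{(1-\varphi)^{3/2}}\, d\zeta.
\end{equation}

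The third step is to evaluate this integral explicitly. Substituting $\varphi = \gamma\,\sech^2(\zeta/2)$, then $u = \tanh(\zeta/2)$, and finally $v = \sqrt{\gamma/(1-\gamma)}\, u$ reduces the integrand to a linear combination of $(1+v^2)^{-3/2}$ and $v^2(1+v^2)^{-3/2}$, both of which have elementary antiderivatives $v/\sqrt{1+v^2}$ and $\sinh^{-1}(v) - v/\sqrt{1+v^2}$ respectively. I expect to obtain the closed form
\begin{equation}
Q(\phi) \;=\; -\,\frac{4\sqrt{\gamma}}{1-\gamma} \,+\, 2\ln\!\left(\frac{1+\sqrt{\gamma}}{1-\sqrt{\gamma}}\right), \qquad \gamma \in (0,1).
\end{equation}

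Differentiating in $s := \sqrt{\gamma} \in (0,1)$ should yield the clean formula
\begin{equation}
\frac{dQ}{ds} \;=\; -\,\frac{8 s^2}{(1-s^2)^2} \;<\; 0,
\end{equation}
and combining this with $\frac{ds}{d\gamma} > 0$ and $\frac{d\gamma}{dk} < 0$ gives the desired strict monotonicity of $k \mapsto Q(\phi)$ on $\left(0,\tfrac{1}{3}c\right)$. The only genuinely non-mechanical step is recognizing the algebraic identity that collapses the three-term integrand into the compact form $-\varphi^2(1-\varphi)^{-3/2}$; once this is in place the remaining work is just careful bookkeeping in an elementary integration, and no real analytic obstacle is anticipated because the soliton profile is fully explicit.
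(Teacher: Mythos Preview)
Your argument is correct and shares the first two steps with the paper: both identify the explicit profile $\varphi(\zeta)=\gamma\,\sech^2(\zeta/2)$ and collapse the integrand of (\ref{Q-form}) to $-\gamma^{-1/2}\varphi^2(1-\varphi)^{-3/2}$. The approaches diverge at the last step. The paper writes $Q(\phi)=-\gamma^{3/2}\!\int_{\mathbb{R}}\sech^4(\zeta/2)\,(1-\gamma\sech^2(\zeta/2))^{-3/2}\,d\zeta$ and differentiates under the integral sign; after combining the two resulting terms over the common denominator $(1-\gamma\sech^2(\zeta/2))^{5/2}$ one obtains $\frac{dQ}{d\gamma}=-\tfrac{3}{2}\gamma^{1/2}\!\int_{\mathbb{R}}\sech^4(\zeta/2)\,(1-\gamma\sech^2(\zeta/2))^{-5/2}\,d\zeta<0$, which is negative by inspection and avoids any explicit antiderivative. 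Your route instead evaluates the integral in closed form and differentiates the resulting elementary function. This takes a little more bookkeeping but buys the explicit expression $Q(\phi)=-\frac{4\sqrt{\gamma}}{1-\gamma}+2\ln\!\frac{1+\sqrt{\gamma}}{1-\sqrt{\gamma}}$ and the exact derivative $\frac{dQ}{ds}=-\frac{8s^2}{(1-s^2)^2}$, which is more informative (e.g.\ one reads off the endpoint asymptotics of $Q$ directly). Either proof is complete.
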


\begin{proof}
The second-order equation (\ref{varphi-eq}) with $b = 2$ admits the exact solution 
for the solitary wave centered at $\zeta = 0$:
\begin{equation}
	\label{KdV-sol}
\varphi(\zeta) = \gamma \; {\rm sech}^2 \big(  \frac{1}{2} \zeta \big).
\end{equation}
Substituting (\ref{KdV-sol}) to $Q(\phi)$ in (\ref{Q-form}) for $b = 2$ yields
\begin{align*}
	Q(\phi)	&= -\gamma^{-1/2} \int_{\mathbb{R}} (1-\varphi)^{-3/2} \varphi^2 d\zeta = -\gamma^{3/2} \int_{\mathbb{R}}  \frac{ {\rm sech}^4(\frac{1}{2} \zeta)}{(1 - \gamma {\rm sech}^2(\frac{1}{2} \zeta))^{3/2}} d\zeta,
	\end{align*}
from which we compute
	\begin{align*}
	\frac{d}{d\gamma} Q(\phi) = -\frac{3}{2} \gamma^{1/2} \int_{\mathbb{R}} \frac{{\rm sech}^4( \frac{1}{2} \zeta)}{(1 - \gamma {\rm sech}^2( \frac{1}{2} \zeta))^{5/2}} d\zeta < 0,
	\end{align*}
	hence the stability criterion (\ref{stab-criterion}) is satisfied.
\end{proof}

\begin{lemma}
	\label{lemma-DP}
	The mapping (\ref{map-intro}) is strictly increasing for $b = 3$, 
	$c > 0$, and $k \in \left(0,\frac{1}{4} c \right)$.
\end{lemma}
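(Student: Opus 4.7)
The strategy mirrors that of Lemma \ref{lemma-CH}. Specializing equation (\ref{varphi-eq}) to $b=3$ gives $\varphi'' = \varphi(1-\varphi)(1-2\gamma^{-1}\varphi)$. Multiplying by $\varphi'$, integrating, and using the decay at infinity yields the first-order quadrature
$$(\varphi')^2 = \gamma^{-1}\varphi^2 (\varphi_- - \varphi)(\varphi_+ - \varphi),$$
where $\varphi_\pm = \tfrac{1}{3}\big[(\gamma+2) \pm \sqrt{(1-\gamma)(4-\gamma)}\big]$ are the positive real roots, for $\gamma \in (0,1)$, of $\varphi^2 - \tfrac{2}{3}(\gamma+2)\varphi + \gamma = 0$, with the smaller root $\varphi_-$ coinciding with the maximum of the soliton. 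Unlike the $b=2$ case, no simple closed form for $\varphi(\zeta)$ is available, but the quadrature alone will suffice.

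Substituting $b=3$ into (\ref{Q-form}), the bracket simplifies via the algebraic identity $3\varphi + (1-\varphi) - (1-\varphi)^{-2} = -\varphi^2(3-2\varphi)/(1-\varphi)^2$. Using the symmetry of $\varphi$ about $\zeta = 0$ and the change of variable $d\zeta = d\varphi/|\varphi'|$, I would obtain
$$Q(\phi) = -2 \int_0^{\varphi_-(\gamma)} \frac{\varphi(3-2\varphi)\,d\varphi}{(1-\varphi)^2 \sqrt{(\varphi_- - \varphi)(\varphi_+ - \varphi)}},$$
in which the prefactor $\gamma^{-1/2}$ from (\ref{Q-form}) cancels the $\gamma^{1/2}$ from $1/|\varphi'|$, so the $\gamma$-dependence enters only through the roots $\varphi_\pm$.

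To differentiate this expression in $\gamma$ without contending with a moving endpoint and an endpoint singularity, I would substitute $\varphi = \varphi_-(\gamma) \sin^2\theta$, which maps $[0,\varphi_-]$ to the fixed interval $[0,\pi/2]$ and regularizes the square-root singularity at $\varphi = \varphi_-$. Differentiation in $\gamma$ then reduces to straightforward differentiation of a smooth integrand, producing terms involving $d\varphi_\pm/d\gamma$ that are computable from the Vieta relations $\varphi_- \varphi_+ = \gamma$ and $\varphi_- + \varphi_+ = \tfrac{2}{3}(\gamma+2)$. The plan is to regroup the resulting expression into a single manifestly-signed integral and conclude $dQ/d\gamma < 0$, which by (\ref{stab-criterion}) yields the claim.

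The main obstacle is this final sign certification: the differentiated integrand is a rational function of $\sin\theta$ with coefficients depending on $\varphi_\pm(\gamma)$, and negativity is not manifest. Establishing it will likely require exploiting the defining identities $\varphi_- \varphi_+ = \gamma$ and $\varphi_- + \varphi_+ = \tfrac{2}{3}(\gamma+2)$ in tandem with the factorization $\varphi^2 - \tfrac{2}{3}(\gamma+2)\varphi + \gamma = (\varphi_- - \varphi)(\varphi_+ - \varphi)$ to reorganize the integrand as a positive weight times a negative function. An alternative is to mimic the $b=2$ computation more literally by differentiating the integral in $\varphi$ directly, relying on the vanishing of the boundary contribution at $\varphi = \varphi_-$ due to the integrable square-root singularity there.
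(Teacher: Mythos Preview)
Your setup is correct: the quadrature $(\varphi')^2=\gamma^{-1}\varphi^2(\varphi_--\varphi)(\varphi_+-\varphi)$, the algebraic simplification of the bracket in (\ref{Q-form}) to $-\varphi^2(3-2\varphi)/(1-\varphi)^2$, and the resulting $\varphi$-integral for $Q(\phi)$ all match what the paper obtains. However, the proposal stops precisely at the hard step. You acknowledge that after the $\sin^2\theta$ substitution and differentiation in $\gamma$ the sign of the integrand is ``not manifest,'' and you only sketch possible strategies for regrouping. As written this is not a proof but a plan whose decisive step is missing.

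There is also a factual gap that explains why the paper's argument is cleaner: contrary to your claim, a simple closed form \emph{is} available for $b=3$. The Gardner-type profile
\[
\varphi(\zeta)=\frac{3\gamma}{\,2+\gamma+\sqrt{(1-\gamma)(4-\gamma)}\cosh\zeta\,}
\]
solves (\ref{varphi-eq}). The paper exploits this by writing $\partial_\gamma\varphi=\gamma^{-1}\varphi+\hat\varphi$ with $\hat\varphi$ given explicitly; then $\frac{d}{d\gamma}Q(\phi)$ splits into two $\zeta$-integrals, one with integrand proportional to $\varphi^2(9-7\varphi+2\varphi^2)/(1-\varphi)^3$ and the other to $\varphi(3-3\varphi+\varphi^2)\hat\varphi/(1-\varphi)^3$. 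The first polynomial is positive on $[0,1]$ by inspection, and $\hat\varphi>0$ follows from the elementary inequality $(5-2\gamma)\cosh\zeta>2\sqrt{(1-\gamma)(4-\gamma)}$, so both terms are negative and the criterion (\ref{stab-criterion}) follows immediately. Your quadrature route could in principle reach the same conclusion, but the explicit solution turns the sign certification from an open obstacle into a two-line check.
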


\begin{proof}
The second-order equation (\ref{varphi-eq}) with $b = 3$ admits the exact solution 
for the solitary wave centered at $\zeta = 0$:
	\begin{equation}
	\label{Gardner-sol}
\varphi(\zeta) = \frac{3 \gamma}{2 + \gamma + \sqrt{(1-\gamma) (4 - \gamma)} \cosh(\zeta)}.
	\end{equation}
Substituting (\ref{Gardner-sol}) to $Q(\phi)$ in (\ref{Q-form}) for $b = 3$ yields
	\begin{align*}
	Q(\phi) &= -\gamma^{-1/2} \int_{\mathbb{R}} \frac{\varphi^2 (3 - 2 \varphi)}{(1-\varphi)^2} d\zeta.
	\end{align*}
Writing $\partial_{\gamma} \varphi = \gamma^{-1} \varphi + \hat{\varphi}$ 
with 
$$
\hat{\varphi}(\zeta) = \frac{3 \gamma \;
[ (5 - 2 \gamma) \cosh(\zeta) - 2  \sqrt{(1-\gamma) (4 - \gamma)}]}{2  \sqrt{(1-\gamma) (4 - \gamma)} (2 + \gamma +  \sqrt{(1-\gamma) (4 - \gamma)} {\rm sech}(\zeta))^2},
$$	
we obtain explicitly
\begin{align*}
\frac{d}{d\gamma} Q(\phi) &= -\frac{1}{2 \gamma^{3/2}} 
\int_{\mathbb{R}} \frac{\varphi^2 (9 - 7 \varphi + 2 \varphi^2)}{(1-\varphi)^3} d\zeta -\frac{2}{\gamma^{1/2}} 
\int_{\mathbb{R}} \frac{\varphi (3 - 3 \varphi + \varphi^2)}{(1-\varphi)^3} \hat{\varphi} d\zeta.
\end{align*}
Both terms are strictly negative. Indeed, the first term is negative because
$$
\min_{\varphi \in [0,1]} (9 - 7 \varphi + 2 \varphi^2) = 4 > 0.
$$ 
The second term is negative if $\hat{\varphi}(\zeta) > 0$ for every $\zeta \in \mathbb{R}$, which is true for every $\gamma \in (0,1)$ because 
$$
(5 - 2 \gamma) \cosh(\zeta) - 2  \sqrt{(1-\gamma) (4 - \gamma)} 
\geq (5 - 2 \gamma) - 2  \sqrt{(1-\gamma) (4 - \gamma)} > 0.
$$
Hence the stability criterion (\ref{stab-criterion}) is satisfied.
\end{proof}

In the general case $b > 1$, we verify the stability criterion (\ref{stab-criterion}) numerically. For this, we integrate the second-order 
equation (\ref{varphi-eq}) with the boundary conditions $\varphi(\zeta) \to 0$ as $|\zeta| \to \infty$ and obtain the first-order invariant:
\begin{equation}
\label{varphi-invariant}
-(\varphi')^2 + \frac{(1-\varphi)^{b-1}}{\gamma b (b-1)} \left[ 2(1-\gamma) + 2(1-\gamma) (b-1) \varphi + b (b-1) \varphi^2 \right] = \frac{2(1-\gamma)}{\gamma b (b-1)}.
\end{equation}
If the center of the solitary wave is translated to the origin, 
so that $\varphi'(0) = 0$, we can parameterize the integral for $Q(\phi)$ in (\ref{Q-form}) and obtain the equivalent representation
\begin{equation}
\label{Q-equiv}
Q(\phi) = \int_0^{\varphi_0} 
\frac{ 2 \sqrt{b(b-1)} (1-\varphi)^{-\frac{b+1}{2}} \left[ (1-\varphi)^{b-1} [(b-1) \varphi + 1]  - 1 \right]d\varphi}{\sqrt{(1-\varphi)^{b-1} \left[ 2(1-\gamma) + 2(1-\gamma) (b-1) \varphi + b (b-1) \varphi^2 \right] - 2(1-\gamma)}},
\end{equation}
where $\varphi_0$ is the turning point, for which the integral is weakly singular due to vanishing denominator. 

Figure \ref{fig-momentum} shows dependence of $Q(\phi)$ computed from the numerical quadrature of the integral (\ref{Q-equiv}) versus $\gamma$ in $[0,1]$
for different integer values of $b$. The turning point $\varphi_0$ was obtained by using the Newton--Raphson iterative method. The integrals in the expression (\ref{Q-equiv}) were computed by using the composite midpoint rules. It follows from the behavior of $Q(\phi)$ versus $\gamma$ that the stability 
criterion (\ref{stab-criterion}) is satisfied for every $b > 1$.

\begin{figure}[htb!]
	\includegraphics[width=0.6\textwidth]{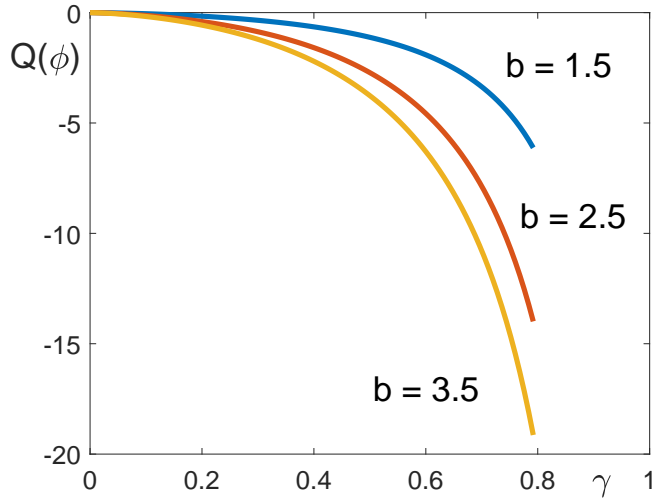}
	\caption{$Q(\phi)$ versus $\gamma$ for three values of $b$} 
	\label{fig-momentum}
\end{figure}

The final lemma shows that the stability criterion (\ref{stab-criterion}) 
is satisfied in the asymptotic limits $\gamma \to 0$ and $\gamma \to 1$ for every $b > 1$. In view of the transformation (\ref{gamma}), 
they corresponds to the asymptotic limits $k \to 0$ and $k \to (b+1)^{-1} c$ for fixed $b > 1$ and $c > 0$. 

\begin{lemma}
	\label{lemma-asymptotics}
	The mapping (\ref{map-intro}) is strictly increasing for $b > 1$ and $c > 0$ in the asymptotic limits $k \to 0$ and $k \to (b+1)^{-1} c$. 
\end{lemma}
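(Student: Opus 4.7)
\noindent\textit{Proof plan.} Via the relation (\ref{gamma}), the limits $k \to 0^+$ and $k \to (b+1)^{-1}c^-$ correspond, respectively, to $\gamma \to 1^-$ and $\gamma \to 0^+$, so by (\ref{stab-criterion}) it suffices to establish $\tfrac{d}{d\gamma}Q(\phi) < 0$ in each limit. The two cases require different asymptotic techniques applied to the representations (\ref{Q-form}) and (\ref{Q-equiv}), respectively.

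For $\gamma \to 0^+$, the profile is small and admits the KdV-type scaling $\varphi = \gamma\tilde\varphi$. Substitution into (\ref{varphi-eq}) reduces the equation at leading order in $\gamma$ to the autonomous ODE
\begin{equation*}
-\tilde\varphi''(\zeta) + \tilde\varphi - \tfrac{b+1}{2}\tilde\varphi^2 = 0,
\end{equation*}
with the explicit solitary-wave solution $\tilde\varphi(\zeta) = \tfrac{3}{b+1}\operatorname{sech}^2(\zeta/2)$, independent of $\gamma$. A Taylor expansion of the integrand in (\ref{Q-form}) near $\varphi = 0$ shows that the constant and $O(\varphi)$ terms cancel identically (which is the reason $Q(\phi)$ is well-defined), leaving the leading contribution $-\tfrac{1}{2}b(b-1)\varphi^2 + O(\varphi^3)$. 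Substituting $\varphi = \gamma\tilde\varphi + O(\gamma^2)$, extracting the prefactor $\gamma^{-1/2}$, and using $\int_{\mathbb{R}}\operatorname{sech}^4(\zeta/2)\,d\zeta = 8/3$, I would obtain
\begin{equation*}
Q(\phi) = -\frac{12\,b(b-1)}{(b+1)^2}\,\gamma^{3/2} + O(\gamma^{5/2}), \qquad \gamma \to 0^+,
\end{equation*}
so that $\tfrac{dQ}{d\gamma} \sim -\tfrac{18\,b(b-1)}{(b+1)^2}\gamma^{1/2} < 0$.

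For $\gamma \to 1^-$, set $\varepsilon = 1-\gamma$ and work with (\ref{Q-equiv}) under the change of variable $\delta = 1-\varphi$, so the integration range becomes $[\delta_*, 1]$ with $\delta_* = 1-\varphi_0$. Expanding the turning-point condition at leading order in $\varepsilon$ yields $\delta_*^{b-1} = \tfrac{2\varepsilon}{b(b-1)}\,(1+o(1))$, so the upper endpoint collapses onto the $(1-\varphi)^{-(b+1)/2}$ singularity of the numerator. I would apply matched asymptotics: on the inner scale $\tilde\delta = \delta/\delta_*$, the denominator reduces at leading order to $\sqrt{2\varepsilon(\tilde\delta^{b-1}-1)}$, and the substitution $\tilde\delta^{b-1} = 1+u^2$ converts the inner integral into the elementary $\tfrac{2}{b-1}\int_0^{\infty}(1+u^2)^{-3/2}du = \tfrac{2}{b-1}$, yielding an inner contribution of $-\tfrac{2b}{\varepsilon}$. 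The outer region $\delta = O(1)$ contributes $O(1)$ by uniform convergence of the integrand on compact subsets of $(0,1]$, and matching-region divergences of order $\delta_1^{-(b-1)}$ cancel exactly. Therefore
\begin{equation*}
Q(\phi) = -\frac{2b}{1-\gamma}\bigl(1 + o(1)\bigr), \qquad \gamma \to 1^-,
\end{equation*}
so $\tfrac{dQ}{d\gamma} \sim -\tfrac{2b}{(1-\gamma)^2} \to -\infty$. A sanity check at $b=2$ using the explicit profile (\ref{KdV-sol}) reproduces the local singular contribution $-4/\varepsilon = -2b/\varepsilon$ near $\zeta = 0$.

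The main technical obstacle is the $\gamma \to 1^-$ matched asymptotics: the coincidence of the turning point with the essential singularity of the integrand forces a careful splitting of $[\delta_*,1]$ into a boundary layer of width $O(\delta_*)$ and an outer region, with the subleading $\delta_1^{-(b-1)}$ contributions of the two pieces cancelling exactly to leave only the universal $-2b/\varepsilon$ divergence plus a bounded remainder. The $\gamma \to 0^+$ limit, by contrast, is a routine KdV-soliton perturbation expansion driven by the smoothness of the profile in $\gamma$.
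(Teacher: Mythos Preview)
Your $\gamma \to 0^+$ argument is essentially the paper's: both use the KdV-soliton leading profile $\varphi_1(\zeta) = \tfrac{3}{b+1}\operatorname{sech}^2(\zeta/2)$ and arrive at the same leading coefficient for $\tfrac{dQ}{d\gamma}$; you expand $Q$ and then differentiate, while the paper differentiates $Q$ first and then expands, but the content is identical and your Taylor computation of the integrand (vanishing constant and linear terms, leading term $-\tfrac12 b(b-1)\varphi^2$) is correct.

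For $\gamma \to 1^-$ your route is genuinely different and more detailed. The paper simply passes to the limit in (\ref{Q-equiv}), obtaining
\[
\lim_{\gamma \to 1}Q(\phi) = 2\int_0^1 \frac{(1-\varphi)^{b-1}[(b-1)\varphi+1]-1}{\varphi(1-\varphi)^b}\,d\varphi = -\infty,
\]
and stops there. Your matched asymptotics instead extracts the explicit rate $Q \sim -2b/(1-\gamma)$; the turning-point estimate $\delta_*^{b-1}\sim 2\varepsilon/(b(b-1))$, the boundary-layer reduction to $\tfrac{2}{b-1}\int_0^\infty(1+u^2)^{-3/2}du$, the $\delta_1^{-(b-1)}$ cancellation between inner and outer pieces, and the $b=2$ sanity check are all correct. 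What your approach buys is a quantitative divergence rate that makes the conclusion $\tfrac{dQ}{d\gamma}<0$ near $\gamma=1$ far more defensible---the paper's bare $Q \to -\infty$ does not by itself exclude oscillation of $Q'$---at the cost of the matched-asymptotics machinery that the paper's one-line limit avoids. Both arguments leave the passage from the asymptotics of $Q$ to the sign of $Q'$ slightly implicit, but yours comes much closer to closing that gap.
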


\begin{proof}
It follows from (\ref{varphi-eq}) that $\varphi$ depends on $\gamma$ analytically and $\varphi \to 0$ as $\gamma \to 0$. Therefore, it satisfies the Taylor expansion 
\begin{equation*}
\varphi(\zeta) = \gamma \varphi_1(\zeta) + \gamma^2 \varphi_2(\zeta) + \mathcal{O}(\gamma^3) \quad \mbox{\rm as} \;\; \gamma \to 0,
\end{equation*}
where $\varphi_1$ is computed similarly to (\ref{KdV-sol}) in the form 
$$
\varphi_1(\zeta) = \frac{3}{b+1} {\rm sech}^2(\frac{1}{2} \zeta).
$$
Similarly to the proof of Lemma \ref{lemma-DP}, we write $\partial_{\gamma} \varphi = \gamma^{-1} \varphi + \hat{\varphi}$, where 
$\hat{\varphi} = \gamma \varphi_2+ \mathcal{O}(\gamma^2)$, and obtain from $Q(\phi)$ in (\ref{Q-form}) that 
\begin{align*}
\frac{d}{d\gamma} Q(\phi) &= -\frac{1}{2 \gamma^{3/2}} 
\int_{\mathbb{R}} \left[ ((b+2) \varphi - 1) (1-\varphi)^{-\frac{b+3}{2}} + (1 - 3 \varphi + (b-1)(b-2) \varphi^2) (1 - \varphi)^{\frac{b-5}{2}} \right] d\zeta \\
& -\frac{1}{2 \gamma^{1/2}} 
\int_{\mathbb{R}} \left[ (b+1)  (1-\varphi)^{-\frac{b+3}{2}} + ((b-1)^2 \varphi - (b+1)) (1 - \varphi)^{\frac{b-5}{2}} \right] \hat{\varphi} d\zeta.
\end{align*}
We note that 
\begin{align*}
f(\varphi) &:= ((b+2) \varphi - 1)  + (1 - 3 \varphi + (b-1)(b-2) \varphi^2) (1 - \varphi)^{b-1} \\
&= \frac{3}{2} b (b-1) \varphi^2 + \mathcal{O}(\varphi^3)  \quad \mbox{\rm as} \;\; \varphi \to 0
\end{align*}
and 
\begin{align*}
g(\varphi) &:= (b+1) + ((b-1)^2 \varphi - (b+1)) (1 - \varphi)^{b-1}  \\
&= 2 b (b-1) \varphi + \mathcal{O}(\varphi^2)  \quad \mbox{\rm as} \;\; \varphi \to 0.
\end{align*}
By substituting the asymptotic expansions with 
$\varphi = \mathcal{O}(\gamma)$ and $\hat{\varphi} = \mathcal{O}(\gamma)$, 
we obtain 
\begin{align*}
\frac{d}{d\gamma} Q(\phi) = -\frac{3}{4} b (b-1) \gamma^{1/2} \int_{\mathbb{R}} \varphi_1^2 d\zeta + \mathcal{O}(\gamma^{3/2}) \quad \mbox{\rm as} \;\; \gamma \to 0,
\end{align*}
so that the stability criterion (\ref{stab-criterion}) is satisfied as $\gamma \to 0$ for every $b > 1$.

In the opposite asymptotic limit $\gamma \to 1$, it follows from (\ref{Q-equiv}) that $\varphi_0 \to 1$ as $\gamma \to 1$ for $b > 1$ 
and 
$$
\lim_{\gamma \to 1} Q(\phi) = 2 \int_0^1 \frac{(1-\varphi)^{b-1} [(b-1) \varphi + 1] -1}{\varphi (1-\varphi)^b} d \varphi = -\infty,
$$
so that the stability criterion (\ref{stab-criterion}) is satisfied as $\gamma \to 1$ for every $b > 1$.
\end{proof}

\section{Conclusion}
\label{sec-conclude}

We have derived the precise criterion for orbital stability of smooth solitary waves on the nonzero constant background in the $b$-CH equation (\ref{bCH}) with $b > 1$. Perturbations to the horizontal velocity $u(t,x)$ are controlled in $H^3(\mathbb{R})$. Verification of this stability criterion analytically for integrable cases $b = 2$ and $b = 3$ give alternative proofs of orbital stability of smooth solitary waves in the Camassa--Holm and Degasperis--Procesi equations compared to the work in \cite{CS-02} and \cite{Liu-21} respectively. We also verified the stability criterion analytically for every $b > 1$ and $c > 0$ in the asymptotic limits, where the family of smooth solitary waves terminates. The stability criterion is verified numerically in the general case. It is still open to verify the stability criterion analytically for every $b > 1$, $c > 0$, and $k \in (0,(b+1)^{-1} c)$.

This work opens roads to further studies of travelling waves in the $b$-CH model. Stability of smooth periodic waves for the integrable case $b = 2$ 
was considered in \cite{GeyerSAPM} by using two alternative Hamiltonian structure, both are different from the Hamiltonian structure (\ref{sympl-1}) used here. This approach also leads to the precise stability criterion which can only be verified numerically in the general case. It would be natural to explore the Hamiltonian structure (\ref{sympl-1}) for stability of smooth periodic waves both for $b = 2$ and generally for $b > 1$. 

Stability of smooth travelling waves in the $b$-CH model with $b \leq 1$ 
is also of interest from the points of physical applications. Some smooth traveling solitary waves with positive $\phi$ and $\mu$ exist for $b \leq 1$ and their orbital stability can be clarified by using the same analysis as in the proof of Theorem \ref{theorem-main}. Orbital stability of periodic waves for $b \leq 1$ is also an open problem.

\vspace{0.25cm}

{\bf Acknowledgement.} 
The research of S. Lafortune was supported by a Collaboration Grants for Mathematicians from the Simons Foundation (award \# 420847). 
The research of D. E. Pelinovsky was supported by the NSERC Discovery grant. 
The authors thank R. M. Chen, D. Holm, and A. Hone for useful comments 
on the earlier stage of this work.

\bibliographystyle{unsrt}

\end{document}